\newtheorem{theorem}{Theorem}
\newtheorem{lemma}[theorem]{Lemma}
\newtheorem{corollary}[theorem]{Corollary}
\newtheorem{proposition}[theorem]{Proposition}
\theoremstyle{definition}
\newtheorem{problem}{Open Problem}
\newcommand{\fig}[1]{\figurename~\ref{#1}}
\g@addto@macro\bfseries{\boldmath}
\DeclarePairedDelimiter{\ceil}{\lceil}{\rceil}
\newcommand\blfootnote[1]{%
  \begingroup
  \renewcommand\thefootnote{}\footnote{#1}%
  \addtocounter{footnote}{-1}%
  \endgroup
}
 \newcommand*\patchAmsMathEnvironmentForLineno[1]{%
   \expandafter\let\csname old#1\expandafter\endcsname\csname #1\endcsname
   \expandafter\let\csname oldend#1\expandafter\endcsname\csname end#1\endcsname
   \renewenvironment{#1}%
      {\linenomath\csname old#1\endcsname}%
      {\csname oldend#1\endcsname\endlinenomath}}%
 \newcommand*\patchBothAmsMathEnvironmentsForLineno[1]{%
   \patchAmsMathEnvironmentForLineno{#1}%
   \patchAmsMathEnvironmentForLineno{#1*}}%
\date{}
\begin{document}

\title{On Plane Subgraphs of Complete Topological Drawings}

\author[1]{Alfredo Garc\'ia\thanks{Supported by MINECO project MTM2015-63791-R and Gobierno de Arag\'on under Grant E41-17 (FEDER).}}
\author[2]{Alexander Pilz\thanks{Supported by a Schr\"odinger fellowship of the Austrian Science Fund (FWF): J-3847-N35.}}
\author[1]{Javier Tejel$^*$
}
\affil[1]{Departamento de M\'etodos Estad\'isticos and IUMA, Universidad de Zaragoza.
  \texttt{olaverri@unizar.es, jtejel@unizar.es}}
\affil[2]{Institute of Software Technology, Graz University of Technology.
  \texttt{apilz@ist.tugraz.at}}

\maketitle


\begin{abstract}

Topological drawings are representations of graphs in the plane, where
vertices are represented by points, and edges by simple curves connecting the points. A drawing is \emph{simple} if two edges intersect at most in a single point, either at a
common endpoint or at a proper crossing. In this paper we study properties of maximal plane subgraphs of simple drawings $D_n$ of the complete graph $K_n$ on $n$ vertices.
Our main structural result is that maximal plane subgraphs are 2-connected and what we call \emph{essentially 3-edge-connected}. Besides, any maximal plane subgraph contains at least $\lceil 3n/2 \rceil$ edges.
We also address the problem of obtaining a plane subgraph of $D_n$ with the maximum number of edges, proving that this problem is NP-complete.
However, given a plane spanning connected subgraph of $D_n$, a maximum plane augmentation of this subgraph can be found in $O(n^3)$ time.
As a side result, we also show that the problem of finding a largest compatible plane straight-line graph of two labeled point sets is NP-complete.

\blfootnote{\begin{minipage}[l]{0.2\textwidth} \vspace{-8pt}\includegraphics[trim=10cm 6cm 10cm 5cm,clip,scale=0.15]{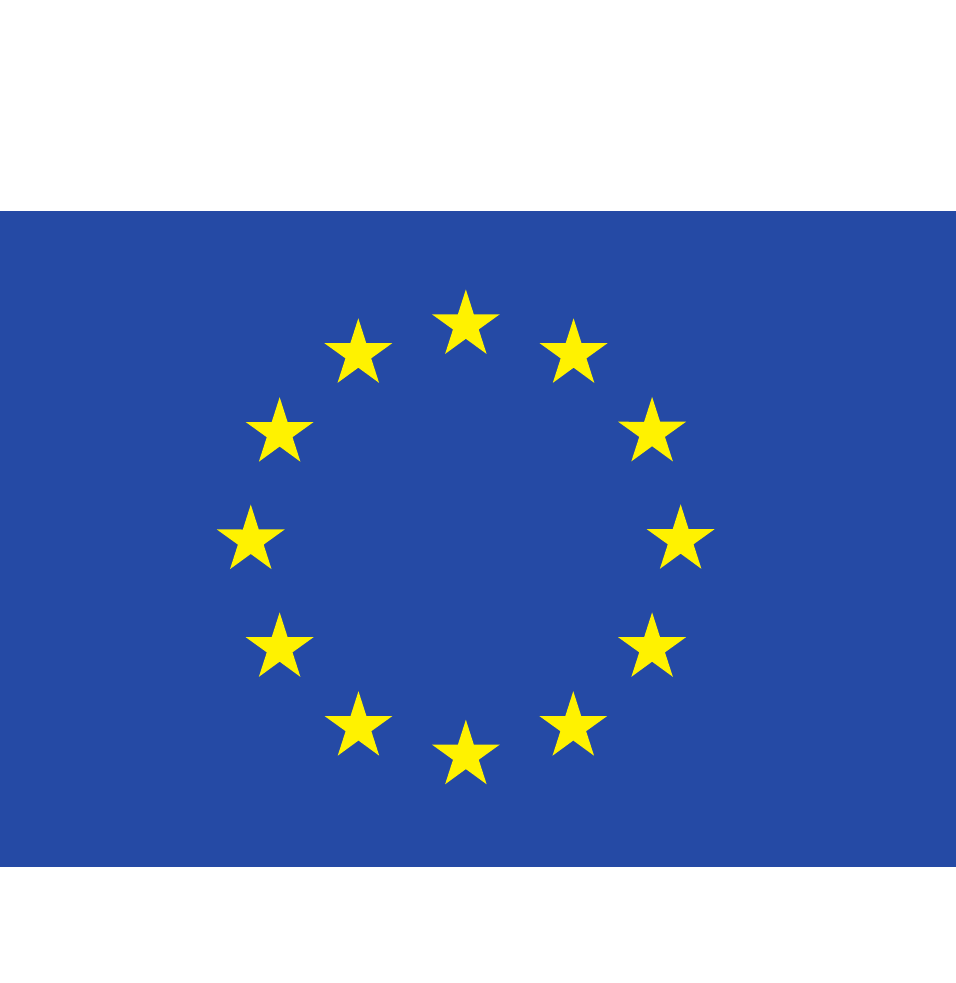} \end{minipage}  \hspace{-1.3cm} \begin{minipage}[l][1cm]{0.82\textwidth}
      This project has received funding from the European Union's Horizon 2020 research and innovation programme under the Marie Sk\l{}odowska-Curie grant agreement No 734922.
     \end{minipage}}

\textbf{\textit{Keywords}}: graph, topological drawing, plane subgraph, NP-Complete problem.

\textbf{\textit{Math. Subj. Class.}}: 05C10, 68R10.

\end{abstract}

\section{Introduction}
In a \emph{topological drawing} (in the plane or on the sphere) of a graph, vertices are represented by points and edges by simple curves connecting the corresponding pairs of points.
Usually, we only consider drawings satisfying some natural non-degeneracy conditions, in particular a drawing is called \emph{simple} (or a \emph{good drawing}) if two edges intersect at most in a single point, either at a common endpoint or at a crossing in their relative interior. When all the edges of a topological drawing are straight-line segments, then the drawing is
called a \emph{rectilinear drawing} or \emph{geometric graph}.

In this paper we consider only simple topological drawings of the complete graph $K_n$ on $n$ vertices. Simple topological drawings of complete graphs have been studied extensively, mainly in the context of crossing number problems. It is well known that a drawing minimizing the number of crossings has to be simple, and besides, if $n\ge 8$, the drawings of $K_n$  minimizing that crossing number are not rectilinear. We refer the reader to~\cite{ bishellable15, shellable_drawings, Balko_Monotone} for recent advances on the Harary-Hill conjecture on the minimum number of crossings of drawings of $K_n$, and to the survey~\cite{Schaefer2013} for some variants on this crossing number problem.

The problem of enumerating all the non-isomorphic drawings of $K_n$ has been studied in~\cite{ all_good_drawings, Kyncl2009, Kyncl2013, rafla} (two drawings are isomorphic if there is a homeomorphism
of the sphere that transforms one drawing
into the other).

Let $D_n$ be a simple topological drawing of $K_n$. Herein, we consider graphs in connection with their drawings, and in particular when addressing subgraphs of $K_n$ we also consider the associated sub-drawing of $D_n$. We are interested in crossing-free edge sets $F$ in $D_n$, and we will say that $F$ is a plane subgraph of $D_n$. Crossing-free edge sets in~$D_n$ have attracted considerable attention, in part because problems on embedding graphs on a set of points usually generalize to finding plane subgraphs of $D_n$. For instance, the problem of computing the maximum number of plane Hamiltonian
cycles that a simple drawings $D_n$ can contain,  is a generalization of the same problem considering only rectilinear drawings of $K_n$. And this last is the (open) problem of computing the maximum number of simple $n$-gons that can be formed on $n$ points in the plane.

There are relatively few
results on plane subgraphs of $D_n$. It is well known that in any drawing $D_n$ of $K_n$, there are plane subgraphs with $2n-3$ edges, and that there are at most $2n-2$ edges uncrossed by any other edge~\cite{fulek_ruiz_vargas,Ringel,Harborth74}. Pach, Solymosi, and T\'oth~\cite{pach_solymosi_toth} showed that any $D_n$ has $\Omega\left(\log^{1/6}(n)\right)$ pairwise disjoint edges.
This bound was subsequently improved in~\cite{fox_sudakov,pach_toth,suk}.
The current best bound of $\Omega(n^{1/2-\epsilon})$ is by Ruiz-Vargas~\cite{many_disjoint}. However, the much stronger conjecture that any simple drawing $D_n$ of $K_n$ contains a plane Hamiltonian cycle remains unproved, although it has been verified for $n\le 9$, see~\cite{all_good_drawings}.

In the course of their work on disjoint edges and empty triangles in $D_n$, Fulek and Ruiz-Vargas~\cite{fulek_ruiz_vargas} showed the following lemma.\footnote{Their lemma is actually more general.
It does not require $F$ and $v$ to be elements of a drawing of $K_n$, but rather of a drawing that contains all edges from $v$ to vertices of $F$.}

\begin{lemma}[Fulek and Ruiz-Vargas~\cite{fulek_ruiz_vargas}]\label{lem:fulek_ruiz_vargas}
Between any plane connected subgraph $F$ of $D_n$ and a vertex $v$ not in $F$, there exist at least two edges from $v$ to $F$ that do not cross~$F$.
\end{lemma}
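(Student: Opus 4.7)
I would use a rerouting and minimality argument applied twice. Step 1: among all $u\in V(F)$, pick $u_1$ minimizing the number $k_1:=|vu_1\cap F|$ of crossings of the edge $vu_1$ with $F$; I claim $k_1=0$. Step 2: minimize $|vu_2\cap F|$ over $u_2\in V(F)\setminus\{u_1\}$ and show this minimum $k_2$ is also $0$.

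For Step 1, suppose for contradiction $k_1\geq 1$. Let $p$ be the first crossing of $vu_1$ with $F$ traversed from $v$, on edge $xy\in F$. The curve $\gamma_x$ obtained by concatenating the portion of $vu_1$ from $v$ to $p$ with the portion of $xy$ from $p$ to $x$ (slightly perturbed toward the face of $F$ containing $v$) is a $v$-to-$x$ curve with no crossings with $F$: none on the $vu_1$-portion by the choice of $p$, and none on the $xy$-portion because $F$ is plane. Compare $\gamma_x$ with the actual edge $vx\in D_n$ -- together they form a closed curve $C_x=\gamma_x\cup vx$ bounding a disk in the sphere. The key inequality is $|vx\cap F|\leq k_1-1$, contradicting the minimality of $u_1$. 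The symmetric construction along $xy$ from $p$ to $y$ gives the companion inequality $|vy\cap F|\leq k_1-1$.

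For Step 2, if $k_2\geq 1$ apply the same rerouting to $vu_2$: the first crossed edge $x'y'\in F$ yields endpoints $x',y'$ with both $|vx'\cap F|,|vy'\cap F|<k_2$. Since $x'\neq y'$, at most one of them can equal $u_1$, and the other is a vertex of $V(F)\setminus\{u_1\}$ with strictly fewer crossings than $u_2$, contradicting the minimality of $u_2$. Hence $k_2=0$.

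\textbf{Main obstacle.} The technical crux is the crossing-count inequality $|vx\cap F|\leq k_1-1$. For each $f\in F\setminus\{xy\}$, planarity of $F$ (so $f$ does not cross $xy$) together with the choice of $p$ (no $F$-edge crosses $vu_1$ before $p$) give $|f\cap\gamma_x|=0$, so $|f\cap C_x|=|f\cap vx|\in\{0,1\}$ by the simplicity of $D_n$. Moreover, the tail of $vu_1$ from $p$ to $u_1$ cannot cross $C_x$ in its interior (it shares $v$ with $vx$, and meets $xy$ only at $p$ by simplicity), so it lies entirely on one side of $C_x$ together with all $k_1-1$ subsequent crossings of $vu_1$ with $F$; a Jordan-curve/parity argument matches these crossings to crossings of $vx$ on the complementary side. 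Combined with the ``saved'' crossing on $xy$ itself (which crosses $vu_1$ at $p$ but not $vx$, since they share $x$), this yields the inequality. The most delicate step is correctly classifying edges of $F$ incident to $x$ (which share an endpoint with $vx$) and ensuring the parity bookkeeping across $C_x$ is exact.
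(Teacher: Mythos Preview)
The paper does not prove Lemma~\ref{lem:fulek_ruiz_vargas} directly; it is quoted from Fulek and Ruiz-Vargas. What the paper \emph{does} prove is the stronger Proposition~\ref{prop:ReducedRange}, whose proof is quite different from your proposal: starting from a crossed ray $vr$ with first crossing at $x$ on $pq$, it defines a clockwise range $R_{\text{cw}}\subset S(v)$ and a region $T_{\text{cw}}$ bounded by pieces of $vr$, $pq$, and another ray of $S(v)$, and then argues by induction on $|R_{\text{cw}}|$ that this range contains an uncrossed ray. No crossing counts are compared; the induction parameter is the number of rays in the range.

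Your approach is a crossing-minimality argument, which is natural, but the central inequality $|vx\cap F|\le k_1-1$ is not established by what you wrote, and I do not believe it follows from a Jordan/parity argument. You correctly observe that $C_x=\gamma_x\cup vx$ is a simple closed curve and that every edge of $F$ meets $C_x$ only along $vx$. That tells you, for each $f\in F$, whether the endpoints of $f$ lie on the same side of $C_x$ or on opposite sides. It does \emph{not} give any relation between $|f\cap vx|$ and $|f\cap(\text{tail of }vu_1)|$. Concretely, an edge $f\in F\setminus\{xy\}$ can cross $vx$ (so it has one endpoint in the ``tail'' region $A$) without ever meeting the tail of $vu_1$: once $f$ enters $A$ through $vx$, it can terminate at a vertex of $F$ in $A$ far from the tail. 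So the $k_1-1$ crossings on the tail do not bound the number of crossings on $vx$ from above; if anything, the connectivity of $F$ together with $u_1\in A$ and $y\in B$ only gives a \emph{lower} bound $|vx\cap F|\ge 1$. The sentence ``a Jordan-curve/parity argument matches these crossings to crossings of $vx$ on the complementary side'' is the point where the proof breaks: there is no such matching.

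In short, the minimality scheme is a reasonable starting idea, but the crossing-count comparison you need cannot be obtained from parity across $C_x$. The paper's route (working with the rotation at $v$ and shrinking a range of rays by induction) sidesteps exactly this difficulty, because it never tries to bound $|vx\cap F|$; it instead locates an uncrossed ray inside an ever-smaller range.
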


This result can be used to build large plane subgraphs. For instance, we can begin with $F$ consisting of only one edge, then for each vertex $v$ not in $F$, we add to $F$ the edges from $v$ to $F$ not crossing $F$. In this way, we will obtain a maximal plane subgraph: a plane subgraph $\overline{F}$ such that any edge $e\notin \overline{F}$ crosses some edge of $\overline{F}$.

In Section 2 of this work, we extend that Lemma~\ref{lem:fulek_ruiz_vargas} to arbitrary (not necessarily connected) plane subgraphs. Further, in Section 3, we prove that any plane subgraph of $D_n$ can be augmented to a 2-connected plane subgraph of~$D_n$. A consequence of this result is that maximal plane subgraphs contain at least $\min(\ceil{3n/2}, 2n-3)$ edges, and this bound is tight.
Maximal plane subgraphs of $D_n$ have other interesting properties.
For example, we show that, when removing two edges from a maximal plane subgraph, it either stays connected or one of the two components is a single vertex.
Another consequence of the previous results is that for every vertex $v$ of a drawing $D_n$, there is a plane subgraph of $D_n$ consisting of the $n$-vertex star of edges incident to $v$, plus the edges of a spanning tree on the $n-1$ vertices of $V \setminus \{v\}$.

The problem setting changes when we want our plane graphs not only to be maximal, but also to contain the maximum number of edges.
While for geometric graphs, every maximal plane subgraph is a triangulation and thus also has a maximum number of edges, the situation is different for plane subgraphs of $D_n$. In Section 4, we will prove that computing a plane subgraph of $D_n$ with maximum number of edges is an NP-complete problem.
However, if a connected plane spanning subgraph $F$ is given,
we can adapt a classic algorithm from computational geometry to show that a maximum plane augmentation of $F$ can be found in $O(n^3)$ time.

As a side result, we also show that the problem of finding a largest compatible plane graph on two labeled point sets is NP-complete.

Finally, going back to Lemma~\ref{lem:fulek_ruiz_vargas}, we give an $O(n)$ algorithm to compute all the edges from a vertex $v$ to a plane connected subgraph $F$ that do not cross $F$. 

\section{Adding a single vertex}
We now discuss a generalization of Lemma~\ref{lem:fulek_ruiz_vargas} to arbitrary plane subgraphs.
This generalization will also follow independently from Theorem~\ref{thm:2-connected}.
Still, the following proposition gives further insight on the position of the uncrossed edges around the vertex~$v$, which might help in the construction of  algorithms.

We assume that a simple topological drawing~$D_n$ of $K_n$ in the plane is given, with vertex set $V = \{v_1, \dots, v_n\}$. If $x_1,x_2$ are two points on an edge $e$ of $D_{n}$ (not necessarily the endpoints of $e$), by line $x_1x_2$ we mean the portion of the curve $e$ of the drawing placed between the points $x_1$ and $x_2$.
For a vertex $v$, the \emph{star graph} with center~$v$ is the subgraph formed by the edges connecting $v$ to all the other vertices.
We denote this set of edges by $S(v)$, usually call \emph{rays} to these edges emanating from $v$, and we suppose that the rays of $S(v)$ are (circularly) clockwise ordered. By the clockwise range $[ vp,vq]$ of $S(v)$ we mean the ordered set of rays placed clockwise between $vp$ and $vq$, including rays $vp$ and $vq$. When $vp$ or $vq$ or both are not included in that ordered set of rays, we will use $(vp,vq] , [ vp,vq)$ or $(vp,vq) $, respectively. In the same way, we can define counterclockwise ranges.

\begin{figure}[tb]
\centering
\includegraphics[page=1,scale=1.0]{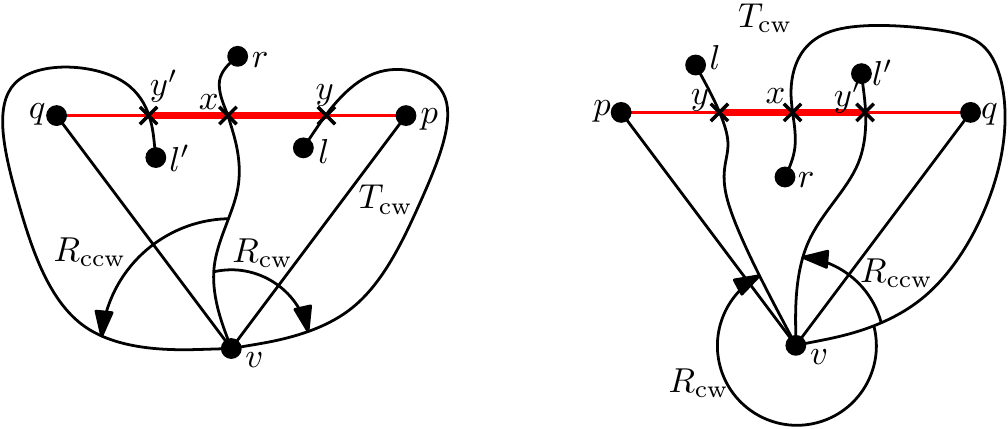}
\caption{The clockwise and counterclockwise ranges of a first crossing.}
\label{Rays1}
\end{figure}

In the rest of this section, we suppose $F$ is a given plane subgraph of $D_n$ and $v$ a vertex not in $F$. In the figures, we use red color for the edges of $F$, so we usually call them \emph{red edges}. We say a ray $vr$ of $S(v)$ is \emph{uncrossed} if it does not cross any edge of $F$.

Suppose that the ray $vr$ crosses some edge of $F$, let $e=pq$ be the first edge of~$F$ crossed by $vr$, and let $x$ be the first crossing point.
Without loss of generality, we can suppose that the rays $vr$, $vp$, and $vq$ appear in this clockwise order in $S(v)$. See \fig{Rays1}.

We define the clockwise range $R_\text{cw}$ of rays centered at $v$ corresponding to the crossing $x$ in the following way:
if no ray in the clockwise range $(vp,vq)$ crosses the edge~$pq$ between $x$ and $p$, then $R_\text{cw}$ is the range $(vr,vp]$;
otherwise, (some rays in the clockwise range $(vp,vq)$ cross the line $xp$), $R_\text{cw}$ is the clockwise range $(vr,vl]$, where $vl$ is the last ray in $(vp,vq)$ crossing the line $xp$. That implies that if $vl$ crosses $xp$ at the point $y$, among the intersection points of rays in $(vp,vq)$  with the line $xp$, the closest to $x$ is $y$.
See \fig{Rays1}.
Analogously, the range $R_\text{ccw}$ is defined either as the counterclockwise range $(vr,vq]$ if no edge in the counterclockwise range $(vq,vp)$ crosses the line $xq$, or as the counterclockwise range $(vr,vl']$, where $vl'$ is the ray in the counterclockwise range $(vq,vp)$ crossing the line $xq$ in a point $y'$ closest to $x$.
By definition, the rays $vr,vp,vl,vl',vq$ appear clockwise in that order around~$v$.
Observe that $R_\text{cw}$ and $R_\text{ccw}$ are disjoint sets and they are also nonempty, as $vp$ is in $R_\text{cw}$ and $vq$ is in $R_\text{ccw}$. The following result generalizes Lemma~\ref{lem:fulek_ruiz_vargas}.\footnote{Like Lemma~\ref{lem:fulek_ruiz_vargas}, this result is more general.
It does not require $F$ and $v$ to be elements of a drawing of $K_n$, but rather of a drawing that contains all edges from $v$ to vertices of $F$.}

\begin{proposition}\label{prop:ReducedRange}
Suppose the ray $vr$ first crosses the edge $e$ of $F$ at the point $x$.
Let $R_\text{cw}$ and $R_\text{ccw}$ be the clockwise and counterclockwise ranges of rays of $v$ corresponding to that crossing.
Then, each one of these two ranges contains an uncrossed ray.
As a consequence, $S(v)$ contains at least two uncrossed rays.
\end{proposition}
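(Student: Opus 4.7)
The plan is to prove the existence of an uncrossed ray in $R_\text{cw}$; a symmetric argument (exchanging the roles of clockwise and counterclockwise) gives one in $R_\text{ccw}$, and since the two ranges are disjoint as noted above, this yields at least two distinct uncrossed rays in $S(v)$. I describe the argument for Case~2 where $R_\text{cw} = (vr, vl]$; Case~1 is analogous with $vp$ and $p$ playing the roles of $vl$ and $y$. First I would form the simple closed curve $\gamma$ consisting of the sub-arc of $vr$ from $v$ to $x$, the sub-arc of $pq$ from $x$ to $y$, and the sub-arc of $vl$ from $y$ back to $v$. This is simple because: $vr$ and $vl$ meet only at $v$; $vr$ meets $pq$ only at $x$ (since $x$ is the first crossing of $vr$ with $F$); and $vl$ meets $pq$ only at $y$ (two edges of a simple drawing cross at most once). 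Let $D$ be the bounded region enclosed by $\gamma$, oriented so that the interior angle at $v$ is the clockwise sector from $vr$ to $vl$.

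Next I would establish a trapping property: every ray in $[vp, vl]$ lies entirely in $\overline{D}$. Such a ray initially enters $D$ by the orientation choice, and cannot cross $\gamma$: it does not cross $vr$ or $vl$ (common endpoint $v$), and it does not cross the sub-arc $xy$ of $pq$---for $vp$ because $vp$ shares the endpoint $p$ with $pq$, and for $vm \in (vp, vl)$ because the extremal property of $vl$ (that $y$ is the intersection closest to $x$ among rays of $(vp, vq)$ meeting $xp$) forces any crossing of $vm$ with $pq$ to lie on the sub-arc $yp$, not on $xy$. Consequently, the endpoint of every ray of $[vp, vl]$ lies in $\overline{D}$.

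With this in place, I would apply Lemma~\ref{lem:fulek_ruiz_vargas} to a plane connected subgraph confined to $\overline{D}$. Take $H$ to be the connected component, containing $pq$, of the subgraph of $F$ formed by those edges lying entirely in $\overline{D}$. Then $H$ is plane and connected with $v \notin H$, so the lemma produces at least two rays from $v$ to vertices of $H$ that do not cross $H$. Any such ray has its endpoint in $\overline{D}$, does not cross $pq \in H$ nor either of $vr, vl$, hence does not cross $\gamma$, and therefore remains in $\overline{D}$. Being trapped, it initially enters $D$, so it lies in the clockwise range between $vr$ and $vl$, i.e., in $R_\text{cw}$. Any edge of $F$ that this ray could cross must also meet $\overline{D}$; in the clean situation where every such edge belongs to $H$, the ray is uncrossed by all of $F$, yielding the desired uncrossed ray in $R_\text{cw}$.

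The main obstacle is precisely the ``clean situation'' assumption: in general some edge of $F$ other than $pq$ may cross $vl$ strictly between $v$ and $y$, so that a portion of this edge lies inside $\overline{D}$ without being captured by $H$, and such an edge could still be crossed by a ray of $[vp, vl]$. The cleanest remedy I foresee is to replace $vl$ in the construction of $\gamma$ by a ray $v\tilde{l} \in (vp, vq)$ whose \emph{first} crossing with $F$ lies on $xp$ and is closest to $x$ among such first crossings. One must then verify that under the Case~2 hypothesis such a $v\tilde{l}$ exists, that it belongs to $R_\text{cw}$, and that the sub-arc of $v\tilde{l}$ from $v$ to its first crossing is uncrossed by $F$, placing us back in the clean setting. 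This verification and the corresponding re-derivation of the trapping property is the step I expect to require the most care.
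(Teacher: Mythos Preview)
Your approach has a structural gap that is more basic than the obstacle you flag. The edge $pq$ itself never lies entirely in $\overline{D}$: the sub-arc of $pq$ from $x$ towards $q$ exits $D$ immediately at $x$ and cannot return, since it meets neither the arc $vx$ of $vr$ (only intersection is $x$) nor the arc $yv$ of $vl$ (the unique crossing $y$ of $vl$ with $pq$ lies on $xp$, not on $xq$), and $q\notin D$ by your choice of side. Consequently your subgraph $H$, declared to be the component ``containing $pq$'' among the edges of $F$ lying entirely in $\overline{D}$, is ill-defined; more to the point, your ``clean situation'' (every edge of $F$ that meets $\overline{D}$ lies entirely in $\overline{D}$) can \emph{never} hold, because $pq$ always violates it. Without $pq\in H$, the rays produced by Lemma~\ref{lem:fulek_ruiz_vargas} may cross the arc $xy$ of $pq$, hence cross $\gamma$, and you lose the conclusion that they stay in $\overline{D}$ and therefore land in $R_\text{cw}$. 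Your proposed remedy---replacing $vl$ by some $v\tilde l\in(vp,vq)$ whose \emph{first} crossing with $F$ lies on $xp$---does not touch this problem (the arc $xq$ still escapes the region), and in any case such a $v\tilde l$ need not exist: it can happen that every ray of $(vp,vq)$ hitting $xp$ is first crossed by some other edge of $F$.

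The paper's proof sidesteps all of this by not invoking Lemma~\ref{lem:fulek_ruiz_vargas} and instead inducting on $|R_\text{cw}|$. One takes the clockwise first ray $vr'$ of $R_\text{cw}$; if it is crossed, a short case analysis (according to whether the first edge of $F$ it meets is $pq$ itself or a different edge) shows that the clockwise range $R'_\text{cw}$ attached to that first crossing is strictly contained in $R_\text{cw}$, and the induction hypothesis finishes. No auxiliary subgraph inside the region is needed.
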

\begin{proof}
We prove the statement for $R_\text{cw}$, the proof for $R_\text{ccw}$ is identical.

Observe that by definition, no red edge can cross the line $vx$, and a ray in the clockwise range $(vq,vr]$ cannot cross the line $xp$.
Let $y$ be the crossing point between the red edge $e = pq$ and the ray $vl$. When $R_\text{cw}$ is $(vr,vp]$ (i.e., no ray in the clockwise range $(vp,vq)$ crosses $xp$), then we identify the points $p$, $l$ and $y$.
The lines $vx,xy$ and $yv$ define a simple closed curve, that divides the plane into two regions $T_\text{cw},\overline{T_\text{cw}}$, where $T_\text{cw}$ is the region not containing the point $q$.

From the definition of $T_\text{cw}$, it follows that a ray containing a  point placed in the interior of $T_\text{cw}$ must be in the range $R_\text{cw}$. Besides, a red edge can cross the boundary of that region only through the line $yv$, and hence, if a red edge $e$ crosses $yv$, one endpoint of $e$ must be inside $T_\text{cw}$ the other one in $\overline{T_\text{cw}}$.

The proof is done by induction on $|R_\text{cw}|$, the number of rays in that range.
So, first suppose that the only ray in the range $R_\text{cw}$ is the ray $vp$. In this case, $T_\text{cw}$ is the region bounded by the closed curve $vx,xp,pv$ not containing the point $q$. This region cannot contain any vertex $r'$ of $F$, because then $vr'$ would be in $R_\text{cw}$, therefore $vp$ must be uncrossed.
This proves the base case of the induction.

Now suppose that the proposition has been proved for any clockwise range containing less than $|R_\text{cw}|$ rays.
Let $vr'$ be the first ray of $R_\text{cw}$.
Of course, if the ray $vr'$ is uncrossed, the proof is done, so we can suppose that the ray $vr'$ is first crossed by a red edge $e'$ at a point $x'$.
We are going to prove that the clockwise range $R'_\text{cw}$ corresponding to the crossing $x'$ is strictly contained in $R_\text{cw}$. Then, by induction, $R'_\text{cw}$ contains an uncrossed ray, and thus also~$R_\text{cw}$. To prove that $R'_\text{cw}\subset R_\text{cw}$ strictly, it is enough to prove that the clockwise last ray of $R'_\text{cw}$ is contained in $R_\text{cw}$.

\begin{figure}[!htb]
\centering
\includegraphics[page=2,scale=1.0]{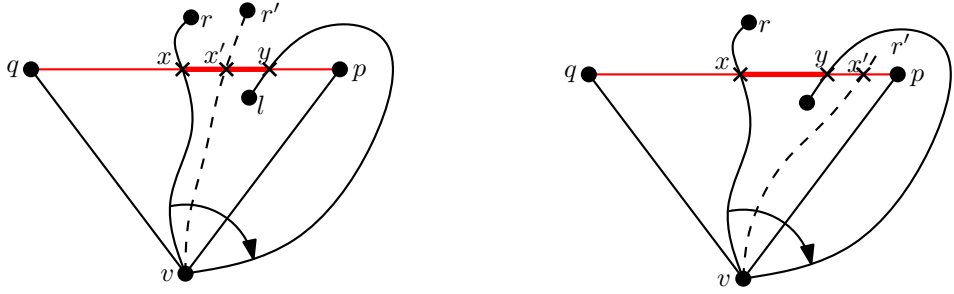}
\caption{Case A, the ray $vr'$ first crosses the edge $e=pq$.}
\label{Rays2}
\end{figure}

Let us first analyze Case A: when the edge $e'$ is precisely the edge $e$. See \fig{Rays2}.
In this case, if $x'$ is between $x$ and $y$, then the clockwise range  $R'_\text{cw}$ corresponding to the new crossing point $x'$ is precisely $R_\text{cw}$ minus its first ray $vr'$. And if $x'$ is between $y$ and $p$, then all the points of the line $x'p$ (including point $p$) are in the interior of the region $T_\text{cw}$, therefore the corresponding last ray of $R'_\text{cw}$ has to be in $R_\text{cw}$. Thus, in both subcases is  $R'_\text{cw}\subset  R_\text{cw}$ strictly.

\begin{figure}[!htb]
\centering
\includegraphics[page=3,scale=1.0]{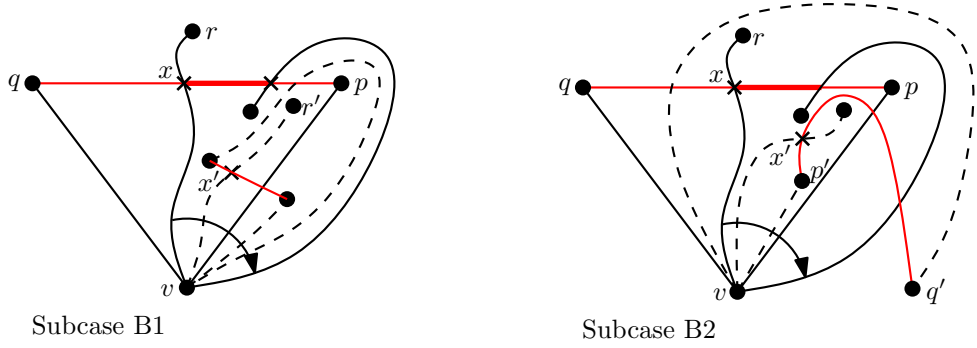}
\caption{Case B, the ray $vr'$ first crosses an edge $e'\neq e$.}
\label{Rays3}
\end{figure}

Suppose now Case B: when $e'\neq e$. See \fig{Rays3}.
Clearly, at least one endpoint of $e'$ is in $T_\text{cw}$, as otherwise the ray $vl$ would be crossed twice by $e'$.
Hence, either both endpoints are in $T_\text{cw}$, subcase B1, or one of them is in $T_\text{cw}$ and the other one in $\overline{T_\text{cw}}$, subcase B2.
In subcase B1, the entire edge $e'=p'q'$ must be inside $T_\text{cw}$. Therefore, any ray containing a point of $e'$ must be in $R_\text{cw}$. In particular, the last ray of $R'_\text{cw}$ must be in $R_\text{cw}$, and hence, $R'_\text{cw}$  is strictly contained in $R_\text{cw}$.

In subcase B2, an endpoint, $p'$, of $e'$ is inside $T_\text{cw}$ and the other, $q'$, is in $\overline{T_\text{cw}}$. Observe that the ray $vp'$ must be in $R_\text{cw}$, however the ray $vq'$ cannot  be in $R_\text{cw}$ because the range $(vr,vr')$ is empty, and a ray in $[vr',vl]$ finishing at $q'$ has to cross the edge $e'$. See \fig{Rays3}.
Therefore, the rays $vr',vp',vq'$ appear clockwise around $v$ in this order. Hence, the last ray of $R'_\text{cw}$ is either $vp'$ or a ray crossing the line $x'p'$. In any case, as the line $x'p'$ is inside $T_\text{cw}$, this last ray of $R'_\text{cw}$ has to be in $R_\text{cw}$.
This completes the proof.
\end{proof}

\section{Structure of maximal plane subgraphs}
Let $D_n$ be an arbitrary simple drawing of $K_n$.
In this section, we identify several structural properties of maximal plane subgraphs of~$D_n$, using Lemma~\ref{lem:fulek_ruiz_vargas} or Proposition~\ref{prop:ReducedRange} as our main tool.
Maximal plane subgraphs turn out to be 2-connected.
While there are examples of maximal plane subgraphs that are not 3-connected, we elaborate further on the structure, showing that a maximal plane subgraph is either 3-edge-connected or has a vertex of degree~2.

\begin{theorem}\label{thm:2-connected}
A maximal plane subgraph of $D_n$ is spanning and 2-connected.
\end{theorem}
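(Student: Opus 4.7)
The plan is to derive a contradiction with the maximality of $F$: in each failure mode, to exhibit a $K_n$-edge that crosses no edge of $F$ and could therefore be added to it.

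\textbf{Spanning.} Suppose some $v\in V$ has no incident edge in $F$, and regard $F$ as a plane subgraph whose underlying vertex set does not contain $v$. Apply Proposition~\ref{prop:ReducedRange} to $F$ and $v$: either no ray of the star $S(v)$ crosses $F$ at all, or the first crossing of some ray yields two uncrossed rays in the ranges $R_\text{cw}$ and $R_\text{ccw}$. Either way, an uncrossed edge from $v$ to $V(F)$ exists and could be added to $F$ without introducing a crossing, contradicting maximality. Hence $F$ is spanning, and in particular every vertex has degree at least one.

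\textbf{2-connectedness.} Assume $F$ is not 2-connected, so there is a set $S\subseteq V$ with $|S|\le 1$ such that $F-S$ has two distinct components $A_1$ and $A_2$. Set $H_i=A_i$ when $S=\emptyset$, and $H_i=A_i$ together with $v$ and all edges of $F$ from $v$ to $V(A_i)$ when $S=\{v\}$; each $H_i$ is a plane connected subgraph of $D_n$, and no $K_n$-edge between $V(A_1)$ and $V(A_2)$ lies in $F$. Pick any $b\in V(A_2)$; since $b\notin V(H_1)$, Lemma~\ref{lem:fulek_ruiz_vargas} applied to $H_1$ and $b$ yields at least two edges from $b$ to $V(H_1)$ that do not cross $H_1$. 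At most one of these can be the edge $bv$, so at least one is a new edge $ba$ with $a\in V(A_1)$ that does not cross $H_1$.

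The main obstacle is to ensure that this edge $ba$ also fails to cross $H_2$, so that it avoids $F$ altogether. To this end I would choose a face $f$ of the planar embedding of $F$ whose boundary certifies the failure of 2-connectedness -- a face with several boundary components in the disconnected case, or a face whose boundary walk visits $v$ more than once in the cut-vertex case -- and take $b$ on a boundary walk of $f$ lying in $H_2$, so that $b$ sees the ``other side'' of the disconnection across $f$. A topological argument in the spirit of Proposition~\ref{prop:ReducedRange}, tracking the ranges of rays of $S(b)$ and using the fact that $f$ bridges $H_1$ and $H_2$, should produce an uncrossed ray from $b$ that remains within $f$ until it meets $H_1$ and therefore does not cross $H_2$. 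Pinning down the placement of this edge with respect to the bridging face $f$ is the delicate step on which the proof hinges.
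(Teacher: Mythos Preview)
Your spanning argument is fine, but the 2-connectedness part is not a proof: you yourself flag the gap. You obtain an edge $ba$ with $a\in V(A_1)$ that avoids $H_1$, and then you need it to avoid $H_2$ as well. Your proposed fix---choose $b$ on the boundary of a face $f$ that witnesses the separation and argue that the uncrossed ray stays inside $f$---is only a plan, and it is not clear it can be completed with the tools at hand. Lemma~\ref{lem:fulek_ruiz_vargas} and Proposition~\ref{prop:ReducedRange} guarantee uncrossed rays relative to a \emph{single} plane subgraph; neither gives you control over which face of a \emph{different} subgraph the ray traverses. In particular, the two uncrossed rays promised by Proposition~\ref{prop:ReducedRange} applied to $H_1$ could both leave the face $f$ of $F$ through edges of $H_2$ before they ever reach $H_1$. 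Nothing in the ranges $R_{\text{cw}},R_{\text{ccw}}$ prevents this, since those ranges are defined purely in terms of the first crossing with $H_1$.

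The paper sidesteps this difficulty entirely by arguing by induction on~$n$ rather than directly. The key maneuver is to first show (using the induction hypothesis together with Lemma~\ref{lem:fulek_ruiz_vargas}) that a maximal plane subgraph which is not 2-connected would have minimum degree at least~$3$. This forces each connected component, or each leaf block, to have an \emph{interior} vertex $v_1$---one not on the outer face of that piece. Deleting $v_1$ and invoking the induction hypothesis on $n-1$ vertices gives a 2-connected completion $\overline{F'}$; maximality of $\overline{F}$ then localises every new edge of $\overline{F'}$ to the single face where $v_1$ used to sit, so $\overline{F'}$ inherits the disconnection or the leaf block on the far side, contradicting its 2-connectedness. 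The induction is what lets the argument talk about \emph{all} of $F$ at once rather than trying to thread an edge past two pieces simultaneously.
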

\begin{proof}
The proof is by induction on~$n$.
The result is obviously true for $n \leq 3$.
For $n > 3$, assume there exists a maximal plane subgraph $\overline{F}$ that is not 2-connected, and let us see that a contradiction is reached.

We first claim that, under this assumption, $\overline{F}$ has no vertices of degree less than~3.
Suppose the contrary, that the vertex $v$ has degree $\le 2$. Let $F'$ be the subgraph of $\overline{F}$ obtained after removing the vertex $v$, and let $\overline{F'}$ be a maximal plane subgraph (in the drawing $D_n - \{v\}$ of $K_{n-1}$) containing~$F'$.
By the induction hypothesis, $\overline{F'}$ is 2-connected.
We observe that $v$ cannot have (in $\overline{F}$) degree less than~2, since applying Lemma~\ref{lem:fulek_ruiz_vargas} to $v$ and $\overline{F'}$ would give two edges at $v$ not crossing $\overline{F}$, contradicting the maximality of~$\overline{F}$.
So suppose $v$ has degree~2.
As we assume that $\overline{F}$ is not 2-connected, $F'$ cannot be 2-connected.
However, $\overline{F'}$ is 2-connected, and hence there exists an edge $e'$ in $\overline{F'} - F'$.
By the maximality of $\overline{F}$, $e'$ must cross at least one edge $vw$ of $\overline{F}$ incident to $v$.
But applying Lemma~\ref{lem:fulek_ruiz_vargas} to $v$ and $\overline{F'}$ gives at least two edges incident to $v$ not crossing $\overline{F'}$.
These two edges and also $vw$ do not cross $\overline{F}$, contradicting the maximality of $\overline{F}$. Therefore, the claim follows.



Assume now that $\overline{F}$ is not connected.
Let $C_1,C_2$ be two connected components of $\overline{F}$.
As all vertices have (in $\overline{F}$) degree at least 3, $C_1$ cannot be an outerplanar graph, and it has more than one face.
Without loss of generality, we can suppose that $C_2$ is in the unbounded face of $C_1$.
Let $v_1$ be an interior vertex of~$C_1$, $F'$  the graph obtained from $\overline{F}$ by removing $v_1$, and $f_1$ the face of $F'$ containing~$v_1$.
The face containing $C_2$ remains unchanged by the removal of $v_1$.
By induction, $F'$ can be completed to a 2-connected plane graph $\overline{F'}$, and
due to the maximality of $\overline{F}$, all the edges in $\overline{F'} - F'$ should be in the face $f_1$. But then,
as $C_2$ is outside~$f_1$, $\overline{F'}$ could not be connected, a contradiction.
Thus, $\overline{F}$ has to be connected.

By a similar reasoning we arrive at our contradiction to $\overline{F}$ not being 2-connected.
A \emph{block} is a 2-connected component of a graph, and a \emph{leaf block} is a block with only one cut vertex.
Since $\overline{F}$ is not 2-connected, it has at least two leaf blocks $B_1$ and $B_2$.
As all vertices have degree at least 3, $B_1$ cannot have all its vertices on the same face.
Again, without loss of generality, we can suppose $B_2$ is in the outer face of $B_1$, and there is an interior vertex $v_1$ of $B_1$.
Removing $v_1$ from $\overline{F}$, we obtain a plane graph $F'$ that has a face $f_1$ containing $v_1$, and $F'$ is contained in a maximal plane graph $\overline{F'}$ that is 2-connected.
Again, by the maximality of $\overline{F}$, all the edges in $\overline{F'} - F'$ must be in $f_1$, implying that $B_2$ is still a block of $\overline{F'}$, contradicting the fact that $\overline{F'}$ is 2-connected.
Hence, $\overline{F}$ must be 2-connected.
\end{proof}

Theorem~\ref{thm:2-connected} can be used to obtain more properties of maximal plane subgraphs.

\begin{lemma}\label{lem:no_two_consecutive}
If a maximal plane subgraph~$\overline{F}$ of $D_n$ contains a vertex $v$ of degree~2, then the subgraph of $\overline{F}$ obtained after removing $v$ is also maximal in $D_n - \{v\}$. 
\end{lemma}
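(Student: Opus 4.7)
My plan is to suppose, for contradiction, that $F' := \overline{F} - \{v\}$ is not maximal in $D_n - \{v\}$, and to use this assumption to manufacture an edge that can be added to $\overline{F}$ without creating a crossing. Let $p,q$ be the two neighbors of $v$ in $\overline{F}$, so that the only edges of $\overline{F}$ incident to $v$ are $vp$ and $vq$. By assumption there is an edge $e$ of $D_n-\{v\}$, not in $F'$, such that $F'\cup\{e\}$ is plane; the maximality of $\overline{F}$ in $D_n$ then forces $e$ to cross $vp$ or $vq$ (for otherwise $\overline{F}\cup\{e\}$ would itself be plane).

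The key step is then to apply Proposition~\ref{prop:ReducedRange} to the vertex $v$ and the plane subgraph $F'\cup\{e\}$ of $D_n$ (noting that $v\notin F'\cup\{e\}$), obtaining two edges $vw_1,vw_2\in S(v)$ that do not cross any edge of $F'\cup\{e\}$. If either $w_i$ is different from both $p$ and $q$, then $vw_i$ is an edge of $D_n$ not contained in $\overline{F}$; it does not cross $F'$, and, sharing the endpoint $v$ with $vp$ and $vq$, it cannot cross them either in the simple drawing. Hence $\overline{F}\cup\{vw_i\}$ is plane, contradicting the maximality of $\overline{F}$.

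The only remaining case is $\{w_1,w_2\}=\{p,q\}$, and I expect this degenerate situation to be the one delicate point of the proof: Proposition~\ref{prop:ReducedRange} returns only the two edges already incident to $v$ in $\overline{F}$, so no genuinely new edge at $v$ is produced. The trick is to re-read the conclusion of the proposition in this case: it says exactly that $vp$ and $vq$ do not cross any edge of $F'\cup\{e\}$, and in particular they do not cross $e$. Combined with the hypothesis that $e$ does not cross $F'$, this shows that $\overline{F}\cup\{e\}=F'\cup\{vp,vq,e\}$ is plane in $D_n$, once again contradicting the maximality of $\overline{F}$ and completing the argument. Thus the real work is not in finding a new edge at $v$, but in reinterpreting the proposition as a compatibility certificate for $vp,vq$ that can be transferred back to $\overline{F}$ via~$e$.
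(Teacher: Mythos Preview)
Your proof is correct and follows essentially the same route as the paper's: take an edge $e$ compatible with $F'$ that (by maximality of $\overline{F}$) must cross one of $vp,vq$, then apply Lemma~\ref{lem:fulek_ruiz_vargas}/Proposition~\ref{prop:ReducedRange} to $v$ and the enlarged plane graph to get two uncrossed rays at~$v$. The only cosmetic difference is that the paper avoids your case split: since the edge $vw\in\{vp,vq\}$ crossed by $e$ can never be one of the two uncrossed rays, those two rays together with $vw$ give three distinct edges at $v$ not crossing $\overline{F}$, contradicting $\deg(v)=2$ in one stroke (your Case~2 recovers the same contradiction, just phrased differently).
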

\begin{proof}
Suppose the contrary.
Remove $v$ from $\overline{F}$ to obtain $F'$ and let $\overline{F'}$ be a maximal plane graph containing $F'$.
As $\overline{F}$ is maximal but $F'$ is not, $\overline{F'}-F'$ must contain an edge $e'$ that crosses some edge $vw$ of $\overline{F}$.
But by Lemma~\ref{lem:fulek_ruiz_vargas} there are at least two edges from $v$ to $\overline{F'}$. These two edges and also $vw$ do not cross $\overline{F}$, contradicting the maximality of $\overline{F}$.
\end{proof}

\begin{proposition}\label{prop:edge_number}
Any maximal plane subgraph $\overline{F}$ of $D_n$ with $n\geq 3$ must contain at least $\min(\ceil{3n/2}, 2n-3)$ edges.
This bound is tight.
\end{proposition}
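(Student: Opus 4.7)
The plan is to establish the lower bound by induction on $n$, using Theorem~\ref{thm:2-connected} and Lemma~\ref{lem:no_two_consecutive} as the main engines, and then to exhibit tight examples separately. Theorem~\ref{thm:2-connected} guarantees that any maximal plane subgraph $\overline{F}$ of $D_n$ is spanning and $2$-connected, hence $\delta(\overline{F}) \geq 2$. The base case $n=3$ is immediate, since $\overline{F}$ must be the triangle, which has $3 = 2n-3$ edges.

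For the inductive step with $n \geq 4$, I would split on the minimum degree. If $\delta(\overline{F}) \geq 3$, the handshake lemma gives $|E(\overline{F})| \geq \lceil 3n/2 \rceil$ directly. Otherwise pick a vertex $v$ with $\deg_{\overline{F}}(v) = 2$. By Lemma~\ref{lem:no_two_consecutive}, $\overline{F}-v$ is a maximal plane subgraph of $D_n-\{v\}$, which is a simple drawing of $K_{n-1}$, so the induction hypothesis applies and yields
\[
|E(\overline{F})| \;\geq\; 2 + \min\!\Bigl(\Bigl\lceil \tfrac{3(n-1)}{2} \Bigr\rceil,\; 2n-5\Bigr).
\]
A short parity check closes the step: if the inner minimum is $2n-5$ we get $|E(\overline{F})| \geq 2n-3$; if it is $\lceil 3(n-1)/2 \rceil$, then for $n$ odd the right-hand side equals $(3n+1)/2 = \lceil 3n/2 \rceil$, and for $n$ even it equals $(3n+2)/2 > \lceil 3n/2 \rceil$. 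In every case, $|E(\overline{F})| \geq \min(\lceil 3n/2 \rceil, 2n-3)$.

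For the matching upper bound, I would treat the two regimes separately. When $2n-3 \leq \lceil 3n/2 \rceil$, i.e.\ roughly $n \leq 7$, a convex rectilinear drawing of $K_n$ suffices, because any triangulation of the convex $n$-gon is a maximal plane subgraph with exactly $2n-3$ edges. When $\lceil 3n/2 \rceil < 2n-3$, I would exhibit a simple topological drawing of $K_n$ whose maximal plane subgraph is (essentially) cubic: for $n$ even, realise a $3$-regular plane graph (for instance, a prism $C_{n/2}\times K_2$) as the plane part and route each of the remaining edges of $K_n$ so that it crosses at least one edge of the prism; for $n$ odd, use a similar construction with exactly one vertex of degree $4$, giving $(3n+1)/2 = \lceil 3n/2 \rceil$ edges. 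I expect this construction for general large $n$ to be the main obstacle, since one must simultaneously ensure simplicity of the drawing and that every non-prism edge is genuinely crossed; the induction and arithmetic for the lower bound are straightforward given the preceding results.
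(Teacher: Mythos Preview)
Your lower-bound argument is correct and coincides with the paper's: both arguments peel off degree-$2$ vertices via Lemma~\ref{lem:no_two_consecutive} (you phrase it as a single inductive step, the paper iterates the removal until no degree-$2$ vertex remains or only three vertices are left) and finish with the handshake bound $|E|\ge\lceil 3n/2\rceil$ once $\delta\ge 3$. The arithmetic you give is exactly what is needed.

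The gap is in the tightness construction. You only sketch a prism-based example and explicitly flag the difficulty of checking simultaneously that the resulting drawing of $K_n$ is simple and that every non-prism edge is crossed; that verification is not carried out, so the tightness claim is not proved. The paper avoids this difficulty with a different, fully explicit example. For even $n=2(k+1)$ it places the vertices $u_0,u_1,\dots,u_k,v_{k+1},v_k,\dots,v_1$ in convex position, draws every edge straight except the edge $u_0v_{k+1}$ and the $2(k-1)$ diagonals $u_iv_{i+1},\,v_iu_{i+1}$ ($1\le i\le k-1$), which are routed outside the convex hull so that $u_iv_{i+1}$ crosses only $v_iu_{i+1}$ and $u_0v_{k+1}$. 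The plane subgraph consisting of the convex-hull boundary, the chords $u_iv_i$ ($1\le i\le k$), and the outside edge $u_0v_{k+1}$ then has exactly $3n/2$ edges, and both simplicity of the drawing and maximality of this subgraph are immediate from the description (every straight diagonal crosses some $u_iv_i$, every rerouted edge crosses $u_0v_{k+1}$). Odd $n$ is obtained by inserting one extra convex-position vertex near $u_0u_1$. If you want to keep your prism idea you would have to supply a routing with the same level of detail; otherwise the paper's construction is the missing piece.
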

\begin{proof}
Suppose that $n > 3$ and $\overline{F_0}=\overline{F}$ has a vertex $v_0$ with degree 2.
By removing this vertex we obtain another maximal plane graph $\overline{F_1}$ (maximal on $n-1$ points), and if $\overline{F_1}$ is in the same conditions (with at least three vertices and a vertex $v_1$ of degree 2), by removing $v_1$ we obtain a new maximal plane graph $\overline{F_2}$, and so on.
We finish this process in a step $k$ because either $\overline{F_k}$ only has three points, or all the points of $\overline{F_k}$ have degree at least 3.
In the first case, the original graph $\overline{F}$ contains $n=k+3$ vertices and $2k+3$ edges, so $2n-3$ edges.
In the second case, $\overline{F}$ must contain at least $2k+\lceil 3(n-k)/2\rceil$ edges, this amount reaching its minimum value when $k=0$.

Finally, let us see that the bound is tight.
If $2\le n\le6$, then a straight-line drawing on $n$ points in convex position gives the bound $2n-3\le \lceil 3n/2\rceil $.
If $n>6$ and $n$ is an even number, a drawing like the one shown in \fig{TightBound} proves that the bound $\lceil 3n/2\rceil $ is tight.
The drawing is done on $n=2(k+1)$ points in convex position, that clockwise are denoted by $u_0,u_1,u_2,\ldots ,u_k,v_{k+1},v_k,\ldots ,v_2,v_1$. Let $C$ denote the convex hull of that set of points.
All the edges of $D_n$ are drawn straight-line except for the $2(k-1)$ edges $u_iv_{i+1},v_iu_{i+1}, i=1,\ldots ,k-1$, and the edge $u_0v_{k+1}$, that are drawn outside $C$ as shown in \fig{TightBound}.
Observe that the $2(k-1)$ edges $u_iv_{i+1},v_iu_{i+1}, i=1,\ldots ,k-1$, are the diagonals of the $(k-1)$ quadrilaterals $u_iu_{i+1}v_{i+1}v_i$, with $u_iv_{i+1}$ only crossing $v_iu_{i+1}$ and $u_0v_{k+1}$, for $i=1,\ldots ,k-1$.
Clearly, straight-line edges can cross at most once, and the edges placed outside $C$, by construction, cross at most once.
The graph $\overline{F}$ formed by the $2(k+1)$ edges on the boundary of $C$, the $k$ edges $u_iv_i, i=1,\ldots ,k$, and the edge $u_0v_{k+1}$ is clearly plane and maximal, since the other straight-line edges cross at least one edge $u_iv_i$, and the non-straight-line edges cross the edge $u_0v_{k+1}$.

If $n$ is odd, we can add to the previous set a point $u_0'$ between $u_0$ and $u_1$, very close to segment $u_0u_1$, but keeping all the $2k+3$ points in convex position.
By connecting $u'_0$ with straight lines to the rest of the points, we obtain a simple topological drawing of $K_n$ on this set of $n=2k+3$ points, and a new maximal plane graph is obtained by adding the edges $u_0u'_0$,$u'_0u_1$ to the above graph $\overline{F}$.
This new maximal plane subgraph also has $\lceil 3n/2\rceil $ edges.
\end{proof}

\begin{figure}[!tb]
\centering
\includegraphics[page=4,scale=0.9]{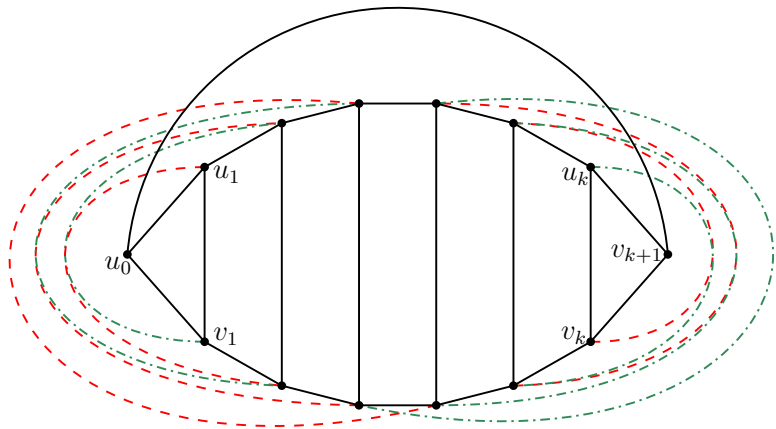}
\caption{A drawing of $K_n$. The missing edges should be drawn as straight-line segments inside the convex hull of the set of points. The black edges form a maximal plane subgraph with $\lceil 3n/2\rceil $ edges.
}
\label{TightBound}
\end{figure}

We mention another interesting implication of Theorem~\ref{thm:2-connected}. For a vertex~$v$, we can augment the star $S(v)$ to a 2-connected plane graph $\overline{F}$,
and since $\overline{F}\setminus \{v\}$ is connected, it contains a spanning tree. So we have
\begin{corollary}\label{thm:star_tree}
For each vertex $v$ there exists a spanning tree $T_v$ of $V \setminus \{v\}$, such that the edges of $S(v)\cup T_v$ form a plane subgraph of~$D_n$.
\end{corollary}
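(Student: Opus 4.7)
The plan is to follow the outline sketched just before the statement and fill in the small justifications. First, I would observe that the star $S(v)$ is itself a plane subgraph of $D_n$: in a simple drawing any two edges sharing the common endpoint $v$ may only intersect at $v$, so no two rays of $S(v)$ cross. Hence $S(v)$ is a crossing-free edge set and can be viewed as an initial plane subgraph.

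Next, I would augment $S(v)$ to a maximal plane subgraph $\overline{F}$ of $D_n$ by repeatedly adding any edge of $D_n$ that does not cross the current edge set (this is possible since the empty addition condition defines maximality). By Theorem~\ref{thm:2-connected}, $\overline{F}$ is spanning and 2-connected.

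From the 2-connectedness, removing any single vertex from $\overline{F}$ keeps the graph connected; in particular $\overline{F} - \{v\}$ is a connected plane subgraph on the vertex set $V\setminus\{v\}$. Any connected graph contains a spanning tree, so let $T_v$ be a spanning tree of $\overline{F}-\{v\}$. Since $T_v\subseteq \overline{F}-\{v\}$ and $S(v)\subseteq \overline{F}$, the union $S(v)\cup T_v$ is contained in the plane subgraph $\overline{F}$, hence is itself crossing-free, and clearly spans $V$ together with $v$.

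There is no real obstacle here; the corollary is a direct assembly of Theorem~\ref{thm:2-connected} with the two elementary facts that a star in a simple drawing is plane and that a 2-connected graph stays connected after deleting one vertex. The only tiny subtlety worth spelling out is the initial remark that $S(v)$ is plane, which is what allows the augmentation step to start from $S(v)$ rather than from an arbitrary edge.
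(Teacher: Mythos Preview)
Your proof is correct and follows exactly the argument the paper sketches: augment $S(v)$ to a maximal plane subgraph~$\overline{F}$, invoke Theorem~\ref{thm:2-connected} to get 2-connectedness, and extract a spanning tree from the connected graph $\overline{F}\setminus\{v\}$. The only addition is your explicit remark that $S(v)$ itself is plane, which the paper leaves implicit.
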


Our next results are about diagonals on plane cycles. Let $C= (v_1, v_2, \dots, v_k)$ be a plane cycle of $D_n$. A diagonal of $C$ is an edge of $D_n$ connecting two non-consecutive vertices of $C$. It was previously known that, even for the case where there are diagonals intersecting both faces of $C$, there are at least $\ceil{k/3}$ of them not crossing~$C$ (cf.~\cite[Corollary~6.6]{pammer}).
Proposition~\ref{prop:edge_number}, applied to the subdrawing induced by the vertices of $C$, directly implies the following result.

\begin{corollary}
Let $C= (v_1, v_2, \dots, v_k)$ be a plane cycle of $D_n$, with $k \geq 6$.
Then, there exists a set $D$ of diagonals, with $|D|\geq \ceil{k/2}$, such that the subgraph $C\cup D$ is plane.
\end{corollary}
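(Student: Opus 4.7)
The plan is to apply Proposition~\ref{prop:edge_number} directly to the subdrawing induced by the vertex set of $C$. Let $D_k$ denote the sub-drawing of $D_n$ obtained by restricting to the $k$ vertices of $C$ together with all $\binom{k}{2}$ edges of $D_n$ between them; this is itself a simple topological drawing of $K_k$, and $C$ is a (spanning) plane subgraph of it. I would first extend $C$ greedily, at each step adding any edge of $D_k$ that does not cross the current plane subgraph, until reaching a maximal plane subgraph $\overline{F}$ of $D_k$ that contains $C$. Crucially, every vertex of $D_k$ is already a vertex of $C$, so every edge of $\overline{F}\setminus E(C)$ connects two non-consecutive vertices of $C$, i.e., is a diagonal of $C$.

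Next I would invoke Proposition~\ref{prop:edge_number} on the maximal plane subgraph $\overline{F}$ of the $k$-vertex drawing $D_k$ to conclude
\[
|E(\overline{F})| \;\geq\; \min\bigl(\lceil 3k/2\rceil,\, 2k-3\bigr).
\]
A routine check shows that for $k\geq 6$ the inequality $\lceil 3k/2\rceil \leq 2k-3$ holds (with equality at $k=6$ and $k=7$), so the bound simplifies to $|E(\overline{F})|\geq \lceil 3k/2\rceil$. Setting $D := E(\overline{F}) \setminus E(C)$, we obtain
\[
|D| \;\geq\; \lceil 3k/2\rceil - k \;=\; \lceil k/2\rceil,
\]
and the subgraph $C\cup D = \overline{F}$ is plane by construction.

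There is essentially no obstacle to this argument: the only things to verify are the elementary arithmetic comparison $\lceil 3k/2\rceil \leq 2k-3$ for $k\geq 6$ and the observation that a maximal plane augmentation of $C$ inside the induced subdrawing $D_k$ cannot introduce any new vertex, so every newly added edge is necessarily a diagonal of $C$. Both are immediate, and the corollary really is a one-shot application of Proposition~\ref{prop:edge_number}.
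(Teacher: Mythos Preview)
Your argument is correct and is exactly the approach taken in the paper: the corollary is stated as a direct consequence of Proposition~\ref{prop:edge_number} applied to the subdrawing induced by the vertices of~$C$, and you have simply written out the details of that application.
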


It turns out that the structure of the diagonals of a cycle, as shown in the next lemma, is useful for our further results.

\begin{lemma}\label{lem:empty_faces}
Let $C= (v_1, v_2, \dots, v_k)$ be a plane cycle of $D_n$, $k \geq 3$, dividing the plane into two faces $f_1$ and $f_2$.
If there is no diagonal of $C$ entirely in $f_1$, then all the diagonals of $C$ are entirely in~$f_2$.
\end{lemma}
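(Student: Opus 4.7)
My plan is to argue by contradiction: assume no diagonal of $C$ lies entirely in $\overline{f_1}$, yet some diagonal $d=v_iv_j$ of $C$ crosses $C$. Up to re-orienting $d$, I may suppose its initial arc from $v_i$ to the first crossing $x$ with $C$ lies in $\overline{f_1}$; let $e=v_av_{a+1}$ be the edge of $C$ containing $x$ (necessarily $a\notin\{i-1,i\}$, since $d$ cannot cross edges incident to $v_i$).

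The main tool will be Proposition~\ref{prop:ReducedRange}, applied at $v_i$ to the plane connected subgraph $F=C-v_i$. Any uncrossed ray from $v_i$ to a vertex of $F$ is either a cycle edge $v_iv_{i\pm 1}$ or a diagonal of $C$ disjoint from $C$; in the latter case, the hypothesis forces it to lie in $\overline{f_2}$. Locally at $v_i$, the two cycle edges $v_iv_{i-1},v_iv_{i+1}$ partition a small neighbourhood into an $f_1$-sector and an $f_2$-sector, and I conclude that every uncrossed diagonal from $v_i$ leaves into the $f_2$-sector; in particular no ray strictly inside the $f_1$-sector is uncrossed with respect to $F$.

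Since $d$ itself lies in the $f_1$-sector at $v_i$, the clockwise range $R_{\mathrm{cw}}$ guaranteed by Proposition~\ref{prop:ReducedRange}, starting just after $v_iv_j$, must contain an uncrossed ray that by the above cannot be strictly in the $f_1$-sector; hence $R_{\mathrm{cw}}$ has to reach $v_iv_{i-1}$ or extend past it. Unpacking the definition of $R_{\mathrm{cw}}$, this forces either $v_iv_a$ to lie on the $f_2$-side of $v_i$, or the existence of a ray $v_iv_l$, clockwise strictly between $v_iv_a$ and $v_iv_{a+1}$, that lies on the $f_2$-side and crosses the sub-arc of $e$ between $x$ and $v_a$.

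The hard part will be converting this local structural information into an actual diagonal entirely in $\overline{f_1}$. My plan is to induct on the number of vertices of $C$ on the boundary of a suitable $\overline{f_1}$-subregion. I consider the region $R\subset\overline{f_1}$ bounded by the initial arc of $d$, the sub-arc of $e$ from $x$ to $v_a$, and the arc of $C$ from $v_a$ back to $v_i$ through $v_{i+1},\ldots,v_{a-1}$; by the hypothesis, no diagonal with both endpoints among the boundary vertices $v_i,v_{i+1},\ldots,v_a$ can lie in $R$. When $a\neq i+1$, this boundary has at least three vertices, and I would apply the same reasoning at an interior boundary vertex of $R$ (whose $\overline{f_1}$-neighbourhood sits inside $R$) to either obtain a diagonal in $R$ directly or strictly shrink the subregion and recurse. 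The degenerate case $a=i+1$, where $R$'s boundary contains only two vertices, is handled symmetrically via $R_{\mathrm{ccw}}$ and the complementary $\overline{f_1}$-subregion bounded by the initial arc of $d$, the other sub-arc of $e$ from $x$ to $v_{a+1}$, and the long arc of $C$ from $v_{a+1}$ through $v_{a+2},\ldots,v_{i-1}$ to $v_i$, whose boundary contains $k-1$ vertices and allows the induction to proceed.
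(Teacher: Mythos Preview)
Your approach differs substantially from the paper's. The paper inducts on~$k$: it extends $C$ to a maximal plane subgraph of the subdrawing on the $k$ vertices, uses Lemma~\ref{lem:no_two_consecutive} to locate a diagonal $v_kv_2$ at distance~$2$ lying in $f_2$, applies the induction hypothesis to the shorter cycle $(v_2,\ldots,v_k)$ (after checking, via Lemma~\ref{lem:fulek_ruiz_vargas}, that the enlarged face $f_1\cup\Delta$ still contains no diagonal), and then treats the diagonals through $v_1$ with two further auxiliary cycles.

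Your setup via Proposition~\ref{prop:ReducedRange} is correct up to the point where you conclude that $R_{\mathrm{cw}}$ must reach a cycle edge, but there are two genuine gaps.

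First, the reduction ``up to re-orienting $d$'' need not succeed: a diagonal can cross $C$ an even number of times with \emph{both} initial arcs lying in $\overline{f_2}$ (for $k\ge 6$, think of $v_1v_4$ crossing $v_2v_3$ and $v_5v_6$). In that situation neither orientation puts the first arc in $\overline{f_1}$, and Proposition~\ref{prop:ReducedRange} at $v_i$ merely returns uncrossed rays in the $f_2$-sector, which may be diagonals entirely in $f_2$ and give no contradiction.

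Second, and more seriously, the inductive step on the subregion $R$ is not actually set up. To ``apply the same reasoning at an interior boundary vertex $v_m$'' you need a ray from $v_m$ that enters $R$ (equivalently, starts in the $f_1$-sector at $v_m$) and first crosses some edge of $F$; but nothing prevents \emph{every} diagonal at $v_m$ from leaving into the $f_2$-sector --- the two cycle edges may simply be consecutive in the rotation of $v_m$, leaving the $f_1$-sector empty. You also never explain how the information extracted at $v_i$ (that $v_iv_a$, or some $v_iv_l$ crossing the arc $xv_a$, leaves into $f_2$) produces either a diagonal inside $R$ or a strictly smaller subregion of the same type; note that $R$ is bounded by sub-arcs of $d$ and $e$, not by edges of the drawing, so neither the lemma's hypothesis nor Proposition~\ref{prop:ReducedRange} applies to $R$ directly. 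This is exactly where the work must happen, and the sketch does not supply a mechanism.
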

\begin{proof} 
The proof is by induction on $k$.
For $k<5$ the statement is obvious, so suppose $k\ge 5$ and consider only the subdrawing $D_k$  induced by the vertices of $C$. Suppose $C\cup D$ is a maximal plane graph of $D_k$, so necessarily,  $D$ consists of diagonals placed on $f_2$. Let $d$ be a diagonal of $D$ connecting two vertices at minimum distance on the graph $C$.
Lemma~\ref{lem:no_two_consecutive} implies that in a maximal plane subgraph, vertices with degree 2 cannot be adjacent. Therefore, diagonal $d$ has to connect two vertices at distance 2 on $C$.
Without loss of generality, suppose $d=v_k v_2$ and let $\Delta$ be the triangle $v_k v_1 v_2$.
Then, the cycle $C_1=(v_2,v_3,\ldots,v_k)$ with $k-1$ vertices has the faces $f'_1 = f_1 + \Delta$ and $f'_2=f_2-\Delta$.

We claim that there cannot be diagonals of $C_1$ entirely in $f'_1$.
Such a diagonal $e$ entirely in $f'_1$ would have to intersect $\Delta$.
Then, adding $e$ to $C \cup \{v_k v_2\}$ and removing all edges crossed by $e$, we would obtain a plane graph $F$ in which $v_1$ has degree 0 or 1.
By Lemma~\ref{lem:fulek_ruiz_vargas}, there must be another edge between $v_1$ and $C_1$, and this edge would be a diagonal of $C$ entirely in~$f_1$, a contradiction.
Thus, by induction, any diagonal $v_iv_j$ of $C_1$ is entirely in $f'_2$ and hence also in $f_2$.

It remains to see that the diagonals with endpoint $v_1$ are also in $f_2$.
By our induction hypothesis, the diagonal $v_2v_4$ is in $f'_2$ and thus also in $f_2$.
Hence, arguing as before on the cycle $C_3=(v_1,v_2,v_4,\ldots ,v_k)$, we deduce that all the diagonals of $C_3$ incident to $v_1$ must be in $f_2$.
So it remains to see that the diagonal $v_1v_3$ is also in $f_2$.
But $v_3v_5$ is also in~$f'_2$, so it is in $f_2$, and again applying the same reasoning on the cycle $(v_1,v_2,v_3,v_5,\dots,v_k)$, all the diagonals of this cycle not incident to $v_4$ have to be in~$f_2$.
\end{proof}

To prove the next result, we recall some definitions and properties of any 2-connected graph $G=(V,E)$.
Two vertices $v_1,v_2$ are called a \emph{separation pair} of $G$ if the induced subgraph $G \setminus \{v_1,v_2\}$ on the vertices $V \setminus \{ v_1, v_2\}$ is not connected.
Let $G_1,\ldots ,G_l$ be the connected components of $G \setminus \{ v_1, v_2\} $, with $l\ge 2$.
For each $i\in\{1,\ldots ,l\}$, let $G^*_i$ be the subgraph of $G$ induced by  $V(G_i)\cup \{ v_1, v_2\}$.
Observe that $G^*_i$ contains at least one edge incident to $v_1$ and at least another edge incident to $v_2$.

\begin{theorem}\label{thm:2-connected_Component}
Let $\overline{F}$ be a maximal plane subgraph of $D_n$, $n \geq 3$.
Then, for each separation pair $v_1,v_2$ of $\overline{F}$, at least one of the subgraphs $\overline{F}^*_i$ must be 2-connected.
\end{theorem}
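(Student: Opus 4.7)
The plan is to argue by contradiction: assume every $\overline{F}^*_i$ fails to be $2$-connected, and derive a contradiction with the maximality of $\overline{F}$. First, I would establish that each $\overline{F}^*_i$ is connected and that neither $v_1$ nor $v_2$ is a cut vertex of $\overline{F}^*_i$: removing $v_1$ (respectively $v_2$) leaves the connected $G_i$ with $v_2$ (respectively $v_1$) still attached by at least one edge, by the observation preceding the theorem. Hence any cut vertex of $\overline{F}^*_i$ lies in $V(G_i)$, and since $\overline{F}$ is $2$-connected by Theorem~\ref{thm:2-connected}, every vertex of $\overline{F}^*_i$ has degree at least $2$ in $\overline{F}^*_i$.

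The core step is to extend, for each~$i$, the graph $\overline{F}^*_i$ to a maximal plane subgraph $\overline{H}_i$ of the subdrawing $D^i$ of $D_n$ induced on $V(G_i)\cup\{v_1,v_2\}$. Applying Theorem~\ref{thm:2-connected} inside $D^i$ gives that $\overline{H}_i$ is $2$-connected, so under our assumption $\overline{H}_i\setminus\overline{F}^*_i$ contains an edge $e$. Because $e$ does not cross $\overline{F}^*_i$, its interior lies in a single face $f$ of $\overline{F}^*_i$, and by the maximality of $\overline{F}$ it must cross some edge of $\overline{F}\setminus\overline{F}^*_i$; such an edge belongs to some $\overline{F}^*_j$ with $j\neq i$. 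Since $V(G_j)\cap V(G_i)=\emptyset$, every edge of $\overline{F}^*_j\setminus\overline{F}^*_i$ is embedded in the face of $\overline{F}^*_i$ containing $G_j$, so $f$ must contain a whole component $G_j$ and consequently has both $v_1$ and $v_2$ on its boundary.

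The decisive task is to show that for some~$i$ we can choose $e$ so that $f$ contains no $G_j$ with $j\neq i$, which immediately contradicts the maximality of $\overline{F}$. I would exploit the block structure of $\overline{F}^*_i$: its block tree has at least two blocks by the non-$2$-connectedness, and every leaf block $B$ is $2$-connected with $|V(B)|\geq 3$, since a bridge-leaf would force a degree-$1$ endpoint against the minimum-degree-$2$ condition. In the generic case I pick a leaf block $B$ whose interior contains neither $v_1$ nor $v_2$ and a vertex $w\in V(B)\setminus\{c\}$, where $c$ is the unique cut vertex of $B$, and apply Lemma~\ref{lem:fulek_ruiz_vargas} to the plane connected subgraph $F=\overline{F}\setminus(V(B)\setminus\{c\})$ and $w$: this yields two uncrossed edges at $w$. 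Because all $\overline{F}$-edges at $w$ lie within $B$ (as $w$ is not a cut vertex of $\overline{F}^*_i$) while the endpoints of these uncrossed edges lie in $(V\setminus V(B))\cup\{c\}$, at most one of them can already belong to $\overline{F}$; using Proposition~\ref{prop:ReducedRange} to locate the other on the ``outer'' side of $B$ inside the face of $F$ that contains $B$, one shows that it crosses neither $F$ nor $B$, producing an edge addable to $\overline{F}$. The main obstacle is the degenerate configuration in which the block tree of every $\overline{F}^*_i$ is a path whose only two leaf blocks contain $v_1$ and $v_2$ respectively, so no leaf block avoids $\{v_1,v_2\}$; I would resolve this either by removing a degree-$2$ vertex using Lemma~\ref{lem:no_two_consecutive} and invoking Theorem~\ref{thm:2-connected} on $D_{n-1}$ in an outer induction on $n$, or by a direct small-case analysis at the cut vertex separating $v_1$ from $v_2$ in $\overline{F}^*_i$ to exhibit the desired addable edge.
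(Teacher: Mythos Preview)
Your ``generic case'' is vacuous, and this collapses the argument. Suppose some leaf block $B$ of $\overline{F}^*_i$ has $v_1,v_2\notin V(B)\setminus\{c\}$. Since you correctly observe that neither $v_1$ nor $v_2$ is a cut vertex of $\overline{F}^*_i$, this forces $v_1,v_2\notin V(B)$. But then every $\overline{F}$-edge out of $V(B)\setminus\{c\}$ stays inside $\overline{F}^*_i$ (by the definition of the separation pair) and hence inside $B$; so $c$ is a cut vertex of $\overline{F}$, contradicting Theorem~\ref{thm:2-connected}. Consequently every leaf block of every $\overline{F}^*_i$ contains $v_1$ or $v_2$, the block tree is always a path with the two ends at $v_1$ and $v_2$, and what you call the ``degenerate configuration'' is in fact the only configuration. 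Your treatment of that configuration is not a proof: there is no reason a degree-$2$ vertex should exist, and even if one does, passing to $D_{n-1}$ via Lemma~\ref{lem:no_two_consecutive} does not obviously transport the conclusion back; the alternative ``direct small-case analysis at the cut vertex'' is not specified at all. Your earlier idea of extending $\overline{F}^*_i$ to a maximal $\overline{H}_i$ in $D^i$ also stalls here: the edges of $\overline{H}_i\setminus\overline{F}^*_i$ that restore $2$-connectivity must lie in one of the two faces of $\overline{F}^*_i$ incident to both $v_1$ and $v_2$, but nothing prevents them from all lying in the face that contains another $G_j$.

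The paper's proof is quite different and does not try to produce an addable edge by block-tree surgery. It looks instead at the edge $v_1v_2$ of $D_n$ itself. From the block-tree-is-a-path fact (which it derives directly) it follows that $v_1v_2\notin\overline{F}$ and that each cut vertex $u_i$ of $\overline{F}^*_i$ lies on the boundaries of the two ``between-components'' faces $R_{i-1},R_i$ of $\overline{F}$. The key tool is Lemma~\ref{lem:empty_faces}: since $\overline{F}$ is maximal, no diagonal of the boundary cycle of any $R_i$ lies inside $R_i$, hence no diagonal---in particular $v_1v_2$---can meet the interior of any $R_i$. But then $v_1v_2$ must leave $v_1$ between two edges into the same $\overline{F}_i$, and any such curve to $v_2$ is forced through $u_i$ or through the interior of $R_{i-1}$ or $R_i$, which is impossible. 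This route avoids the case split you attempt and is where the real content lies; Lemma~\ref{lem:empty_faces} is the missing ingredient in your proposal.
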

\begin{proof}

Suppose that $v_1,v_2$ is a separation pair of $\overline{F}$, and let  $\overline{F}_1,\overline{F}_2,\ldots ,\overline{F}_l$ be the connected components of $\overline{F}\setminus \{ v_1,v_2\} $, $l\ge 2$. Since $\overline{F}$ is 2-connected, the graph $\overline{F}\setminus \{v_2\}$ is connected with $v_1$ as a cut vertex.
As $\overline{F}$ is plane, we can suppose that $v_1$ is in the outer face of $\overline{F}\setminus \{v_2\}$ ($v_2$ must be inside that face) and that clockwise around vertex $v_1$ first there appear the edges from $v_1$ to some vertices of the component $\overline{F}_1$, then edges connecting $v_1$ to some vertices of $\overline{F}_2$ and so on.
See \fig{2_connected_components}.

\begin{figure}[!htb]
\centering
\includegraphics[page=5,scale=0.9]{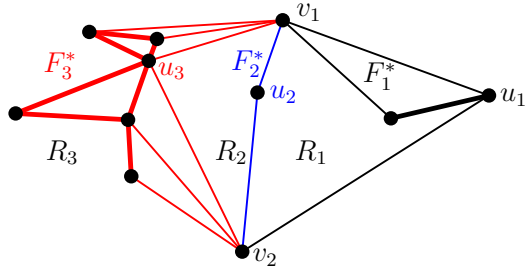}
\caption{A plane graph with separating pair $v_1, v_2$ and three subgraphs $F^*_i$, none of them 2-connected. This plane graph cannot be maximal.}
\label{2_connected_components}
\end{figure}

Now suppose that none of the subgraphs $\overline{F}^*_i$ is 2-connected.
Then each subgraph $\overline{F}^*_i$ contains at least one cut vertex~$u_i$.
Since $\overline{F}_i$  is connected and there exist edges in $\overline{F}^*_i$ incident to $v_1$ and $v_2$, vertex $u_i$ is different from $v_1$ and $v_2$.
On the other hand, a connected component $C$ of $\overline{F}^*_i\setminus \{u_i\}$ must contain at least one of $v_1$ or $v_2$ because otherwise, $C$ would be a connected component of $\overline{F}\setminus \{u_i\}$, contradicting that $\overline{F}$ is 2-connected. Therefore, $\overline{F}^*_i\setminus \{u_i\}$ has exactly two components, one containing $v_1$, the other one containing $v_2$. This also implies that the edge $v_1v_2$ of $D_n$ cannot belong to $\overline{F}$, and that the cut-vertex $u_i$ is in the outer face of $\overline{F}^*_i$ (and hence in the outer face of $\overline{F}\setminus \{v_2\}$) since $v_1$ and $v_2$ are in the outer face of $\overline{F}\setminus \{v_2\}$. See \fig{2_connected_components}.

In the graph $\overline{F}\setminus \{v_2\}$, around the vertex $v_1$, the edges to vertices of $\overline{F_1}$ first appear, then the edges to vertices of $\overline{F_2}$ and so on. Therefore, when we add to that graph the vertex $v_2$ and all the edges connecting $v_2$ to each component $\overline{F_i}$ to obtain $\overline{F}$,  $v_1$ and $v_2$ must be in the faces $R_i$ of $\overline{F}$ defined as the regions placed between the last edge from $v_1$ to $\overline{F}_i$ and the first edge from $v_1$ to $\overline{F}_{i+1}$, for $i=1,\ldots ,l$, and the vertex $u_i$ must be in the faces $R_i$ and $R_{i-1}$.
However, by the maximality of $\overline{F}$, no edge of $D_n$ is entirely in any of those faces $R_i$. Then, Lemma~\ref{lem:empty_faces} implies that no point of the edge $v_1v_2$ of $D_n$ can be inside any face $R_i$. See \fig{2_connected_components}.
Thus, $v_1v_2$ must begin between two edges $v_1v,v_1v'$ with both $v$ and $v'$ belonging to a common connected component $\overline{F}_i$.
However, since $u_i$ belongs to the faces $R_{i-1}$ and $R_i$, any curve from $v_1$ to $v_2$ passes either through the point $u_i$ or through the interior of $R_{i-1}$ or $R_{i}$, which contradicts either the simplicity of $D_n$ or Lemma~\ref{lem:empty_faces}. Therefore, if none of the subgraphs $\overline{F}^*_i$ is 2-connected, $\overline{F}$ cannot be maximal.
\end{proof}

We call a graph \emph{essentially 3-edge-connected} if it stays connected after removing any two edges not sharing a vertex of degree~2 (i.e., the graph either stays connected or one component is a single vertex). Theorem~\ref{thm:2-connected_Component} implies that a maximal plane subgraph is essentially 3-edge-connected:
\begin{theorem}\label{thm:essentially}
Any maximal plane subgraph, $\overline{F}$, of a simple topological drawing of $K_n$ is essentially 3-edge-connected.
\end{theorem}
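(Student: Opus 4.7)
The plan is to argue by contradiction, reducing to Theorems~\ref{thm:2-connected} and~\ref{thm:2-connected_Component}. Suppose $\overline{F}$ is not essentially 3-edge-connected; then there exist edges $e_1,e_2\in\overline{F}$ not sharing a vertex of degree~2 such that $\overline{F}\setminus\{e_1,e_2\}$ is disconnected. Since $\overline{F}$ is 2-connected by Theorem~\ref{thm:2-connected} (and thus 2-edge-connected), $\overline{F}\setminus\{e_1,e_2\}$ has exactly two components $A$ and $B$; write $e_1=a_1b_1$ and $e_2=a_2b_2$ with $a_i\in A$ and $b_i\in B$.

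First I would dispose of the case where $e_1,e_2$ share a vertex $v$. The hypothesis forces $\deg_{\overline{F}}(v)\ge 3$, so $v$ has a further incident edge, which must stay on $v$'s side of the cut (say $A$); hence $|A|\ge 2$. But since both cross edges pass through $v$, removing $v$ from $\overline{F}$ separates the nonempty set $A\setminus\{v\}$ from $B$, so $v$ would be a cut vertex, contradicting 2-connectivity. So $e_1,e_2$ have no common endpoint, $a_1,a_2,b_1,b_2$ are four distinct vertices, and in particular $|A|,|B|\ge 2$.

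The key step is then to apply Theorem~\ref{thm:2-connected_Component} to the ``crossed'' pair $\{a_1,b_2\}$. Since the only $\overline{F}$-edges between $A$ and $B$ are $e_1,e_2$, and both are incident to $a_1$ or $b_2$, deleting $\{a_1,b_2\}$ leaves no edge between the nonempty sets $A\setminus\{a_1\}\ni a_2$ and $B\setminus\{b_2\}\ni b_1$, so $\{a_1,b_2\}$ is a separation pair of $\overline{F}$. By the observation preceding Theorem~\ref{thm:2-connected_Component} (each $F^*_i$ contains an edge at each pole), and because $e_2$ is the unique $\overline{F}$-edge from an $A$-vertex to $b_2$, any component of $\overline{F}\setminus\{a_1,b_2\}$ contained in $A\setminus\{a_1\}$ must contain $a_2$; hence $A\setminus\{a_1\}=:A'$ is connected, and symmetrically $B\setminus\{b_2\}=:B'$ is connected. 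In $F^*_{A'}$ the vertex $b_2$ has degree~1 (only the edge $e_2$), and in $F^*_{B'}$ the vertex $a_1$ has degree~1 (only the edge $e_1$), so neither is 2-connected, contradicting Theorem~\ref{thm:2-connected_Component}.

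The step I expect to be the main obstacle is spotting the right separation pair: the natural symmetric choices $\{a_1,a_2\}$ or $\{b_1,b_2\}$ do not force a contradiction, since in each of them only the subgraph on the opposite side of the cut is automatically non-2-connected. It is the asymmetric ``crossed'' pair $\{a_1,b_2\}$ that makes both $F^*_i$'s have a forced degree-1 vertex simultaneously; once this choice is identified, the remainder is a routine degree count.
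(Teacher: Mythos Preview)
Your proof is correct and follows essentially the same approach as the paper: both identify the ``crossed'' separation pair (your $\{a_1,b_2\}$, the paper's $\{v_1,v_2'\}$) and observe that each of the two induced subgraphs $\overline{F}^*_i$ has a vertex of degree~1, contradicting Theorem~\ref{thm:2-connected_Component}. Your version is more careful in disposing of the shared-endpoint case and in verifying that $A'$ and $B'$ are the only components after deleting $\{a_1,b_2\}$, details the paper leaves implicit.
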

\begin{proof}

If the removal of two edges $v_1 v_2$ and $v_1' v_2'$ from the plane subgraph $\overline{F}$ results in two non-trivial components $C_1,C_2$ (see \fig{3-edge-connectedF}), then $v_1,v'_2$ is a separation pair of $\overline{F}$, that has as induced subgraphs $C_1 \cup \{v'_1 v'_2\}$ and $C_2 \cup \{v_1 v_2\}$, neither of which is 2-connected. Then, by
 Theorem~\ref{thm:2-connected_Component}, $\overline{F}$ cannot be maximal.
\end{proof}
\begin{figure}[!htb]
\centering
\includegraphics[page=6,scale=0.9]{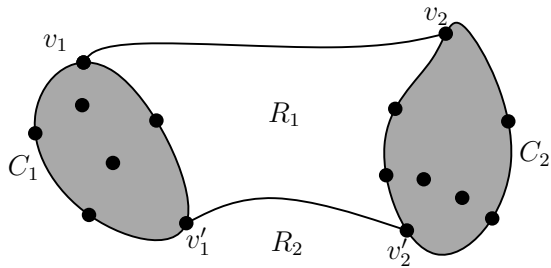}
\caption{A graph that is not essentially 3-edge-connected.
The induced subgraphs of the separation pair $v_1, v'_2$ are subgraph $C_1$ plus edge $v'_1 v'_2$ and subgraph $C_2$ plus edge $v_1 v_2$. By Lemma~\ref{lem:empty_faces}, the edge $v_1v'_2$ of $D_n$ cannot enter either the $R_1$ face or the $R_2$ face, which is impossible in any good drawing.}
\label{3-edge-connectedF}
\end{figure}


\section{Adding the maximum number of edges}
Now, assume that a plane subgraph of $D_n$ is given, and we want to add the maximum number of edges keeping plane the augmented graph. Clearly, the decision of adding one edge will in general block other edges from being added. We will see that the complexity of an algorithm solving this problem highly depends on whether the given subgraph is connected or not.

Before talking about algorithms and their complexity we have to talk about what information of the drawing $D_n$ we will need to compute plane subgraphs. For each vertex $v$, the clockwise cyclic order of edges of $S(v)$ is usually given as a permutation of $V\setminus \{ v\} $ (that is to be interpreted circularly) of the second vertices of all edges of $S(v)$. That permutation of $V\setminus \{ v\} $ is called the \emph{rotation} of $v$, and the \emph{rotation system} of a drawing $D_n$ consists of the collection of the rotations of each vertex $v$ of $D_n$. It is well-known that from the information provided by the rotation system, one can determine whether two edges cross or not, and therefore, that information is enough to compute plane subgraphs. See~\cite{gioan, pach_toth_2000, realizability_in_p}. From the rotation system, we can also compute (in $O(n^2)$ time) the \emph{inverse rotation system} that, for each vertex~$v_i$ and index $j$, $j \neq i$, gives the position of $v_j$ in the rotation of~$v_i$.

When we say that a drawing $D_n$ is given, we mean that we know the rotation system and the inverse rotation system of $D_n$. Using these two structures, one can determine whether two edges cross, in which direction an edge is crossed, and in which order two non-crossing edges cross a third one in constant time~\cite{realizability_in_p}.

\begin{theorem}
Let $F$ be a connected spanning plane subgraph of $D_n$.
Then there is an $O(n^3)$ time algorithm to augment $F$ to a plane subgraph $F'$ of $D_n$ with the maximum number of edges.
\end{theorem}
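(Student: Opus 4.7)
The plan hinges on the fact that, since $F$ is connected and spanning, the complement of $F$ in the plane is a disjoint union of open faces, so every edge $e \in D_n \setminus F$ that does not cross $F$ is a simple curve lying entirely inside a single face of $F$. Consequently, a maximum augmentation is obtained by treating each face $f$ of $F$ independently and choosing, within $f$, a maximum non-crossing subset of the \emph{candidate} chords (edges of $D_n$ contained in $f$); the union of these per-face solutions is the global optimum.

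First, I would traverse the rotation system of $F$ to extract, for each face $f$, its boundary closed walk $W_f = (u_0, u_1, \ldots, u_{m_f - 1})$; vertices may repeat in $W_f$ if $F$ has cut vertices or bridges, and each occurrence is treated as a distinct \emph{position}. Next, for each edge $e = v_a v_b$ of $D_n \setminus F$, I would use constant-time crossing queries supplied by the (inverse) rotation system to scan the $O(n)$ edges of $F$ and decide whether $e$ crosses $F$; if not, I would identify the face $f$ containing $e$ and the unique pair of positions on $W_f$ to which $e$ attaches, by locating in the rotations at $v_a$ and $v_b$ the two $F$-edges immediately flanking $e$. This preprocessing costs $O(n^3)$ in total.

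Within a face $f$, the face is topologically an open disk bounded by $W_f$, so two candidate chords inside $f$ cross if and only if their endpoint positions interleave along $W_f$. Cutting $W_f$ at a fixed position linearizes it into an abstract $m_f$-gon, and the task becomes the classical one of choosing a maximum set of non-crossing candidate diagonals. This is equivalent to computing a triangulation of the $m_f$-gon that uses as many candidate diagonals as possible, and it is solved by the standard polygon-triangulation dynamic program
\begin{equation*}
N[i,j] \;=\; \max_{i < k < j}\Bigl(\,N[i,k] + N[k,j] + c(i,k) + c(k,j)\,\Bigr), \qquad N[i,i+1]=0,
\end{equation*}
where $c(a,b) = 1$ if $(u_a,u_b)$ is a candidate chord and $b > a + 1$, and $c(a,b) = 0$ otherwise; the optimum for $f$ is $N[0, m_f - 1]$, computed in $O(m_f^3)$ time. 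Since $F$ is planar, $\sum_f m_f = 2\,|E(F)| = O(n)$ and every $m_f = O(n)$, so $\sum_f O(m_f^3) \le O(n^2) \cdot \sum_f m_f = O(n^3)$.

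The main obstacle I anticipate is handling faces with non-simple boundary walks (caused by cut vertices or bridges of $F$). The resolution is to work with positions on $W_f$ rather than with vertices of $V$, assigning each candidate chord to a unique pair of positions via the rotation data at its endpoints; this ensures that crossings of candidate chords inside $f$ correspond bijectively to position interleavings on $W_f$, so the polygon-style DP remains valid even for pinched faces, and the per-face optima combine into a globally valid plane augmentation of maximum size.
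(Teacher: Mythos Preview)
Your proposal is correct and takes essentially the same approach as the paper: reduce to independent per-face problems, observe that two candidate chords in a face cross iff their boundary positions interleave, and solve each face by the $O(m_f^3)$ polygon-triangulation dynamic program (the paper phrases this as Klincsek's minimum-weight triangulation with $0/1$ weights, which is equivalent to your maximization recurrence). Your treatment of non-simple face boundaries via boundary-walk \emph{positions} determined by the flanking $F$-edges in the rotation is exactly the ``splitting'' the paper alludes to but does not spell out; the rest, including the $O(n^3)$ preprocessing and the $\sum_f m_f^3 = O(n^3)$ accounting, matches.
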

\begin{proof}
As $F$ is plane and thus contains a linear number of edges, we can identify all the edges of $D_n$ not crossed by $F$ in $O(n^3)$ time.
This also gives us, for each such edge, the face of $F$ in which it is contained, and we can also compute for each face $f$ of $F$ the set $\Delta_f$ of edges of $D_n$ entirely inside~$f$.
Clearly, each face of $F$ can be considered independently, adding the maximum number of edges in it.

Let $f$ be a face of $F$. For simplicity, we assume $f$ to be bounded by a simple cycle $(v_1, \dots, v_k)$. Other cases can be solved
similarly by an appropriate ``splitting'' of edges having $f$ on both sides.
Disregarding $D_n$, consider the rectilinear drawing $\overline{D_k}$ obtained from $k$ points $p_1,\ldots ,p_k$ placed on a circle $C$, and assign to each edge $p_ip_j$  of $\overline{D_k}$ weight $0$ if $v_iv_j$ is in $\Delta_f$, weight $1$ otherwise. Observe that two edges of $\Delta _f$ cross properly, if and only if, the corresponding 0-weight edges in circle $C$ cross properly. It is well-known that a minimum-weight triangulation in $\overline{D_k}$ can be obtained in $O(k^3)$ time~\cite{klincsek} by a dynamic programming algorithm, and this triangulation gives a plane set of 0-weight edges with maximum cardinality. Hence, the corresponding edges of $\Delta _f$ form a plane set of edges entirely inside face $f$ with maximum cardinality.
\end{proof}

In contrast to this result, the problem becomes NP-complete when the subgraph $F$ is not connected.

\begin{theorem}\label{thm:augmenting_hard}
Given a simple topological drawing $D_n$ of $K_n$ and a cardinality $k'$, it is NP-complete to decide whether there is a plane subgraph that has at least $k'$ edges.
\end{theorem}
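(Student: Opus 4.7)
Membership in NP is immediate: any set $F\subseteq E(K_n)$ of edges is a polynomial-size certificate, and using the rotation and inverse rotation systems of $D_n$ together with the constant-time crossing test described above, we can verify in $O(n^4)$ time that no two edges of $F$ cross. Hence the task is to establish NP-hardness.

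My plan is a polynomial reduction from a well-known NP-complete problem such as MAX-2-SAT or Maximum Independent Set (possibly in a planar restricted variant, which tends to be easier to realize geometrically). The target drawing $D_n$ is assembled from three kinds of pieces. First, a plane \emph{backbone} that is heavy enough to force every sufficiently large plane subgraph to contain all of it. Second, one \emph{variable gadget} per Boolean variable, consisting of a pair of mutually crossing edges disjoint from the backbone, so that at most one of the two can be taken. Third, \emph{clause gadgets} that provide extra edges whose inclusion in a plane subgraph is possible exactly when the surrounding variable gadgets choose a satisfying assignment of that clause. Calibrating the contributions so that the backbone dominates the variable edges, and the variable edges dominate the clause edges, allows the total edge count of a maximum plane subgraph to read off ``number of simultaneously satisfied clauses''.

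The principal obstacle will be the requirement that $D_n$ be a simple drawing of the \emph{complete} graph: every pair of the $n$ vertices has to be joined by a curve. The gadgets only describe a few designated edges, so all remaining edges of $K_n$ must be routed without breaking the construction. I would place each gadget inside its own topological disc, well separated from the others by backbone edges, and route every non-gadget edge through corridors that cross many backbone edges. Such an edge cannot coexist with the full backbone in any plane subgraph, so it becomes inert for the optimization. Executing this routing while preserving simplicity globally (no two edges meeting more than once, no self-intersections, and crossings exactly where prescribed by the gadgets) is the heavy technical burden, and verifying that the designed rotation system is realizable as an actual simple topological drawing of $K_n$ is where most of the bookkeeping is concentrated.

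To close the reduction, with $B$ backbone edges, $V$ variable gadgets, and clause gadgets each yielding one candidate edge, the construction guarantees that every optimum plane subgraph has size exactly $B + V + \mathrm{sat}(\varphi)$, where $\mathrm{sat}(\varphi)$ is the maximum number of simultaneously satisfiable clauses of the encoded formula $\varphi$. Setting $k' := B + V + c$ makes the decision problem equivalent to deciding whether $\mathrm{sat}(\varphi) \ge c$, which completes the reduction.
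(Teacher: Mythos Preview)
Your write-up is explicitly a \emph{plan}, not a proof, and the gap you yourself flag is real and unresolved: you never construct the gadgets, never route the $\binom{n}{2}$ edges of $K_n$, and never verify that the resulting rotation system is realizable as a simple drawing. In particular, the scheme ``make every non-gadget edge cross many backbone edges'' is in tension with ``make the gadget edges avoid the backbone'': a variable gadget is a pair of crossing edges that must be compatible with the whole backbone, while a clause gadget must interact with several variable gadgets simultaneously and still avoid the backbone. Nothing in the proposal explains how to achieve this while keeping every pair of edges meeting at most once. As written, the reduction is an outline of intentions, not an argument.

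The paper sidesteps exactly this difficulty by choosing a source problem that already lives in the plane: independent set in \emph{segment intersection graphs}. The segments, together with two auxiliary points $u_i,w_i$ near each endpoint $v_i$, give $n=4s$ points whose complete rectilinear drawing $\overline{D_n}$ is automatically a simple drawing of $K_n$. No global routing of ``other'' edges is needed; they are just straight-line segments. The only modification is a \emph{local} rerouting of the finitely many edges that cross $v_iu_i$ or $v_iw_i$, so that those two edges become uncrossed in the final drawing $D_n$. The counting is then driven by Euler's formula: every maximal plane graph has at most $3n-6$ edges, and each index $i$ with $v_it_i$ absent forces a non-triangular face at $v_i$, costing one edge. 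Thus the maximum plane subgraph has $3n-6-(s-k)$ edges iff $k$ of the original segments are pairwise disjoint. This is both shorter and avoids the realizability quagmire your plan leaves open; if you want to salvage your approach, the first thing to change is the source problem.
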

%
%
\begin{proof}
We give a reduction from the independent set problem on segment intersection graphs ($\overline{SEG}$ problem), which is known to be NP-complete~\cite{segment_independent_set}:
Given a set $S$ of $s$ segments in the plane that pairwise either are disjoint or intersect in a proper crossing, and an integer $k>0$, is there a subset of $k$ disjoint segments?

\begin{figure}[tb]
\includegraphics[page=7,scale=0.86]{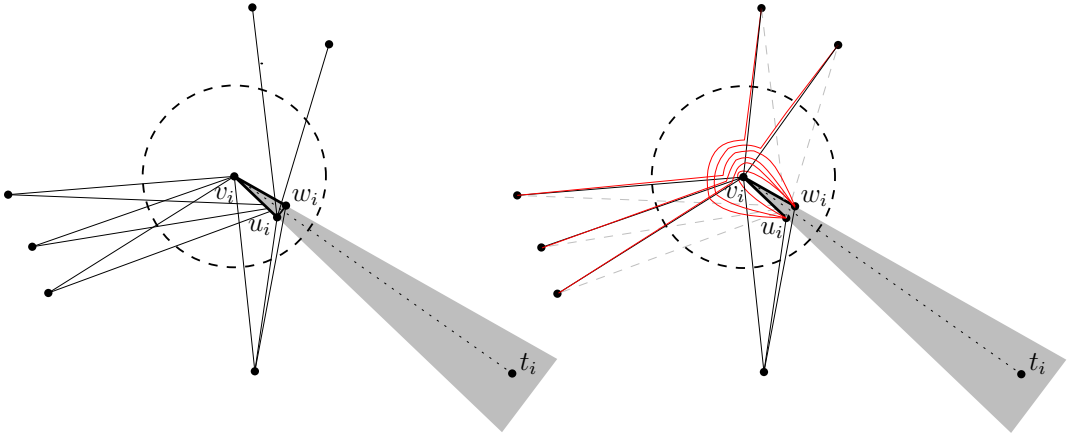}
\caption{Drawings $\overline{D_n}$ (left) and $D_n$ (right).
The gray wedge only contains the endpoint $t_i$.
In $D_n$, the dashed edges need to take a detour to avoid intersecting the edge $u_iw_i$ twice.}
\label{fig:pit}
\end{figure}

For each instance of a $\overline{SEG}$ problem, we are going to build, in polynomial time, a drawing $D_n$ of $K_n$ and an integer $k'$ such that the instance of the $\overline{SEG}$ problem has a Yes answer, if and only if, the drawing $D_n$ contains a plane subgraph with $k'$ edges.

Let $v_i,t_i$ be the endpoints of each segment~$s_i,i=1,\ldots ,s,$ of $S$. We can suppose that these endpoints are in general position and that their convex hull is a triangle. Thus, for each endpoint $v_i$, we can find a disc $B_i$ centered at $v_i$, such that any straight line connecting two endpoints of $S$ different from $v_i$ does not cross $B_i$.

In each disc $B_i$, we place two points $u_i,w_i$ very close to the segment $v_it_i$, in such a way that when connecting the point $v_i$ with straight-line segments to all the
other points, the segments $v_iu_i,v_it_i,v_iw_i$ are clockwise consecutive. In other words, the clockwise wedge defined by the half-lines $v_iu_i,v_iw_i$ only contains the endpoint $t_i$. See \fig{fig:pit}.

Consider the rectilinear drawing $\overline{D_n}$ obtained by connecting the $n=4s$ points $v_i,u_i$, $w_i,t_i$. In $\overline{D_n}$, maximal plane graphs are triangulations, but we are going to consider only the family $\Gamma $ of plane triangulations of $\overline{D_n}$ containing the $2s$ edges $u_iv_i,w_iv_i$. The \emph{weight} of a triangulation of $\Gamma $ is the number of edges $v_it_i$ that it contains. Clearly, in the set $S$ there are $k$ disjoint segments, if and only if, there is a triangulation in $\Gamma $ with weight $k$.

Now, consider the drawing $D_n$ obtained from $\overline{D_n}$ doing the following changes:

For $i=1,\ldots ,s$, only the edges of the star $S(u_i)$ crossing  $v_iw_i$, the edges of the star $S(w_i)$ crossing $u_iv_i$, and the edge $u_iw_i$ are modified.

Suppose that in $S(u_i)$ after $u_iv_i$ are clockwise the edges $u_ip_1,\ldots ,u_ip_k,u_iw_i$,  where each $u_ip_j$ has to cross $v_iw_i$. Let $v_ip_{i_1},v_ip_{i_2}\ldots ,v_ip_{i_k}$  be the clockwise ordered edges of $S(v_i)$ with endpoint one of the vertices $p_i$. Then, we modify $\overline{D_n}$ by redrawing $u_ip_{i_1}$ following first the line $u_iv_i$ until point $v_i$, then turning around $v_i$ and following the line $v_ip_{i_1}$, in such a way that in the rotation of $u_i$ the new edge $u_ip_{i_1}$ is placed just before $u_iv_i$. See \fig{fig:pit}, right. The new drawing obtained is simple, because no edge crosses both $u_iv_i$ and $v_ip_{i_1}$, edges $u_ip_j$ cannot cross $v_ip_{i_1}$ and none edge of $S(p_{i_1})$ can cross $u_iv_i$. Moreover, the number of crossings in the edge $v_iw_i$ has decreased by one. We repeat the same process for the edge $u_ip_{i_2}$ (the new edge $u_ip_{i_2}$ is placed just before $u_iv_i$ in the rotation of $u_i$ ), then $u_ip_{i_3}$, and so on. The same process can be done with the edges $w_iq_j$ crossing $u_iv_i$. See \fig{fig:pit}, right. Finally, we can redraw $u_iw_i$ in the same way, following the edge $u_iv_i$ then turning around $v_i$ following edge $v_iw_i$. If we do this process for all the edges crossing $v_iu_i$ or $u_iw_i$, $i=1,\ldots ,s$, at the end we obtain the simple drawing $D_n$. By construction, in $D_n$, neither the edges $v_iu_i$ nor the
edges $v_iw_i$ are crossed by any other edge.

Now, let us see that $\overline{D_n}$ has a triangulation of the family $\Gamma $ of weight $k$, if and only if, $D_n$ has a plane subgraph of size $k'$, with $k'=3n-6-(s-k)=11s-6+k$. Suppose $\overline{D_n}$ has a triangulation $F$ with weight $k$. This means that $F$ contains $(s-k)$ edges $u_iw_i$. By removing from $F$ these $u_iw_i$ edges, we obtain a plane set $F'$ of edges, where no edge of $F'$ has been modified to obtain the drawing $D_n$. Therefore, the edges of $F'$ also form a plane subgraph in $D_n$ of size $3n-6-(s-k)=11s-6+k$.

Conversely, suppose $D_n$ contains a plane subgraph with $3n-6-(s-k)$ edges. Since the edges $u_iv_i,w_iv_i$ are not crossed by any edge of $D_n$, they must belong to any maximal plane graph of $D_n$. Therefore, $D_n$ has a plane subgraph ${F}$ containing all the edges $u_iv_i,v_iw_i$ and of size $k'\ge 3n-6-(s-k)$. As the wedge $v_iw_i,v_iu_i$ only contains point $t_i$, if the edge $v_it_i$ is not in ${F}$, then, the face of $F$ containing the edges $v_iu_i$ and $v_iw_i$  cannot be a triangle. But, if a plane graph on $n$ vertices contains more than $(s-k)$ non-triangular faces, its maximum number of edges is $<3n-6-(s-k)$. As a consequence, $v_it_i$ is not in ${F}$ for at most $(s-k)$ indices $i$, or equivalently, the plane subgraph ${F}$ contains at least $k$ edges $v_it_i$. This means that we can obtain a triangulation of the family $\Gamma $ of weight $k$ by including
$k$ of these non-crossing edges.
\end{proof}

Note that in the straight-line setting, we can always draw a triangulation of the underlying point set, which contains the maximum number of edges.
However, this is not the case for simple topological drawings.
We were not able to come up with a reduction solving the following problem.

\begin{problem}\label{problem:good_triangulation}
What is the complexity of deciding whether a given $D_n$ contains a triangulation, i.e., a plane subgraph whose faces are all 3-cycles?
\end{problem}

Our reduction can also be adapted for a related problem on compatible graphs.
We leave the realm of general simple topological drawings and consider the following problem in the more specialized setting of geometric graphs (rectilinear drawings).
Let $P = \{p_1, \dots, p_n\}$ and $P' = \{p_1', \dots, p_n'\}$ be two sets of points in the plane.
A planar graph is \emph{compatible} if it can be embedded on both $P$ and $P'$ in a way that there is an edge $p_i p_j$ if and only if there is an edge $p_i' p_j'$.
Saalfeld~\cite{saalfeld} asked for the complexity of deciding whether two such point sets (with a given bijection between them) have a compatible triangulation. We will say that triangulations $\overline{F}$ of $P$ and $\overline{F'}$ of $P'$ have $k'$ compatible edges when there exists a subset of $k'$ edges $p_ip_j$ of $\overline{F}$, such that their images, edges $p'_ip'_j$, are edges of $\overline{F'}$.

We can show the NP-completeness of the following optimization variant of the problem.
(However, as the similar Open Problem~\ref{problem:good_triangulation}, Saalfeld's problem remains unsolved.)

\begin{theorem}
Given two point sets $P = \{p_1, \dots, p_n\}$ and $P' = \{p_1', \dots, p_n'\}$ and the indicated bijection between them, as well as a cardinality $k'$, the problem of deciding whether $P$ and $P'$ admit two triangulations  with $k'$ compatible edges is NP-Complete.
\end{theorem}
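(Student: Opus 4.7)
The plan is to reduce from the $\overline{SEG}$ problem, adapting the pit-gadget construction of Theorem~\ref{thm:augmenting_hard}. Membership in NP is immediate: given a pair of triangulations, we can check compatibility under the bijection in polynomial time.

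For NP-hardness, given an $\overline{SEG}$ instance with segments $s_i = \overline{v_i t_i}$, $i = 1, \ldots, s$, and integer $k$, I would construct two point sets $P$ and $P'$ each of size $n = 4s$ as follows. Take $P$ to be the point set underlying the rectilinear drawing $\overline{D_n}$ of the proof of Theorem~\ref{thm:augmenting_hard}: for each segment $s_i$, add auxiliary points $u_i, w_i$ inside a small disc $B_i$ around $v_i$ so that the clockwise wedge at $v_i$ from $v_iu_i$ to $v_iw_i$ contains only $t_i$, and the four pit-gadget points are in convex position. Construct $P'$ by placing $v'_i, u'_i, w'_i$ at the same positions as their unprimed counterparts, but moving each $t'_i$ very close to $v'_i$ inside the wedge, in such a way that the four pit points remain in convex position. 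Because each segment $v'_it'_i$ is then confined to a tiny neighborhood of its apex $v'_i$ and the apices are well-separated, no two of these segments cross in $P'$. The bijection is the obvious labeling.

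Following the analysis in the proof of Theorem~\ref{thm:augmenting_hard}, every triangulation of $P$ (resp.\ $P'$) must contain the four hull edges of each pit-gadget quadrilateral, plus exactly one of the two possible diagonals, either $v_i t_i$ or $u_i w_i$ (primed in $P'$). Because the segments $v'_i t'_i$ are pairwise non-crossing, there is a triangulation $T'$ of $P'$ whose pit diagonals are all of the form $v'_i t'_i$. In $P$, any triangulation $T$ yields an independent set of $\overline{SEG}$ via $\{i : v_i t_i \in T\}$. Under the bijection, each pit gadget contributes four guaranteed compatible edges (its convex hull), and the fifth pit-gadget edge is compatible precisely when $T$ selects the diagonal $v_i t_i$. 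Setting $k' = 4s + b + k$, where $b$ is a fixed constant counting the compatible edges of a canonical triangulation of the shared outer region, we obtain that $P$ and $P'$ admit triangulations with $k'$ compatible edges if and only if $S$ has an independent set of size~$k$.

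The main obstacle, as in Theorem~\ref{thm:augmenting_hard}, is to pin down $b$, i.e., to control the compatible edges outside the pit gadgets, and especially the edges incident to $t_i$ or $t'_i$ that leave the pit, since $t_i$ and $t'_i$ lie at different positions. I would address this by choosing the placement of $t'_i$ close enough to $v'_i$ so that in every triangulation of $P'$ the star of $t'_i$ is entirely contained in the pit gadget; under such a construction, the only compatible non-pit edges come from pairs of points in $\{v_i, u_i, w_i\}_{i=1}^s$, whose positions coincide in $P$ and $P'$, so $b$ is indeed a constant determined by the construction and independent of the independent-set choice. With this isolation, the count $4s + b + k$ becomes tight and the reduction is complete.
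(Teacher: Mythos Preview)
Your reduction has a genuine gap, and it differs substantially from the paper's.

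First, the claim that ``every triangulation of $P$ must contain the four hull edges of each pit-gadget quadrilateral'' is false. The short edges $v_iu_i$ and $v_iw_i$ are indeed uncrossed (that is what the discs $B_i$ guarantee), but the long edges $u_it_i$ and $w_it_i$ run almost parallel to the original segment $v_it_i$ and can be crossed by any edge that crosses $v_it_i$ away from $B_i$. So there are triangulations of $P$ in which the pit quadrilateral is not even present as a subcomplex, and your ``four guaranteed compatible edges per pit'' count collapses.

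Second, the constant $b$ is not well-defined. In $P$ the points $t_i$ sit at their original positions and participate in the triangulation of the whole ``outer region''; in $P'$ you have pulled the $t'_i$ inside the pits, so the outer region of $P'$ is triangulated on a genuinely different point set. The number of edges between the common points $\{v_i,u_i,w_i\}$ that appear in a triangulation of $P$ depends on how the $t_i$ are woven in, and this number is not fixed across triangulations. Consequently you have no control in the reverse direction: a pair $(T,T')$ could reach the threshold $k'$ by trading off pit diagonals against extra matches in the outer region, and you never argue why this cannot happen.

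The paper sidesteps both problems by leaving all four points $v_i,u_i,w_i,t_i$ at the \emph{same} positions in $P$ and $P'$ and adding a fifth point $\tilde v_i$ per segment, placed inside the triangle $v_iu_iw_i$ in $P$ but just outside the edge $u_iw_i$ in $P'$. Because only the $\tilde v_i$ move, the two triangulations can be compared edge by edge against the full count $3n-6$. The key local fact is that in $P$, if $\tilde v_it_i\notin \overline F$ then both $\tilde v_iv_i$ and $u_iw_i$ are forced into $\overline F$, whereas in $P'$ the edges $\tilde v'_iv_i$ and $u_iw_i$ cross, so at most one of them lies in $\overline{F'}$. This yields exactly one forced incompatibility per non-selected segment and gives the tight threshold $k'=3n-6-(s-k)$, with both directions of the reduction following directly.
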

\begin{proof}

We follow the idea of the proof of Theorem~\ref{thm:augmenting_hard}, and use a reduction from the $\overline{SEG}$ problem. Suppose that an instance of the $\overline{SEG}$ problem is given: a set $S$ of $s$ segments in the plane that pairwise either are disjoint or intersect in a proper crossing, and an integer $k>0$. We will build two sets of points $P = \{p_1, \dots, p_n\}$ and $P' = \{p_1', \dots, p_n'\}$ and obtain an integer $k'$ such that the $\overline{SEG}$ problem has answer Yes if and only if, $P$ and $P'$ admit two triangulations with $k'$ compatible edges.

Let $P$ be the set of $n=5s$ points formed by the $v_i,t_i,u_i,w_i (i=1,\ldots ,s)$ points obtained from $S$ as in the above Theorem~\ref{thm:augmenting_hard}, plus $s$ points $\tilde v_i$, where each point $\tilde v_i$ is placed inside the triangle $v_iu_iw_i$ very close to the point $v_i$, to the right of the oriented line $v_it_i$, in such a way that in the wedge defined by the half-lines $\tilde v_iu_i,\tilde v_iw_i$ the only point of $P$ is $t_i$, and the wedges $u_iv_i,u_i\tilde v_i$ and $w_i\tilde v_i,w_iv_i$ do not contain points of $P$. See \fig{fig:compatible_pit}, left. By construction, any triangulation ${\overline{F}}$ of the set of points $P$ must contain the edge $v_i\tilde v_i$. Observe that if the edge $t_iv_i$ is in ${\overline{F}}$, then the edge $t_i\tilde v_i$ has to be also in ${\overline{F}}$. Also note that $u_iw_i$ is only crossed by the edges $t_iv_i$ and $t_i\tilde v_i$.

In the same way, let $P'$ be the set of $n=5s$ points  $v_i,t_i,u_i,w_i,\tilde v'_i (i=1,\ldots ,s)$, where now each point $\tilde v'_i$ is placed outside the triangle $v_iu_iw_i$, very close to the intersection point of $u_iw_i$ with $v_it_i$, to the right of the line $v_it_i$, and satisfying that any clockwise triangle $u_iw_ip$ contains inside the point $\tilde v'_i$. See \fig{fig:compatible_pit}, right.
The bijection between the points of $P$ and $P'$ is the obvious one,  to each  point $\tilde v_i$ of $P$ corresponds point $\tilde v'_i$ of $P'$, for any other point its image is itself.

\begin{figure}
\centering
\includegraphics[page=8,scale=0.86]{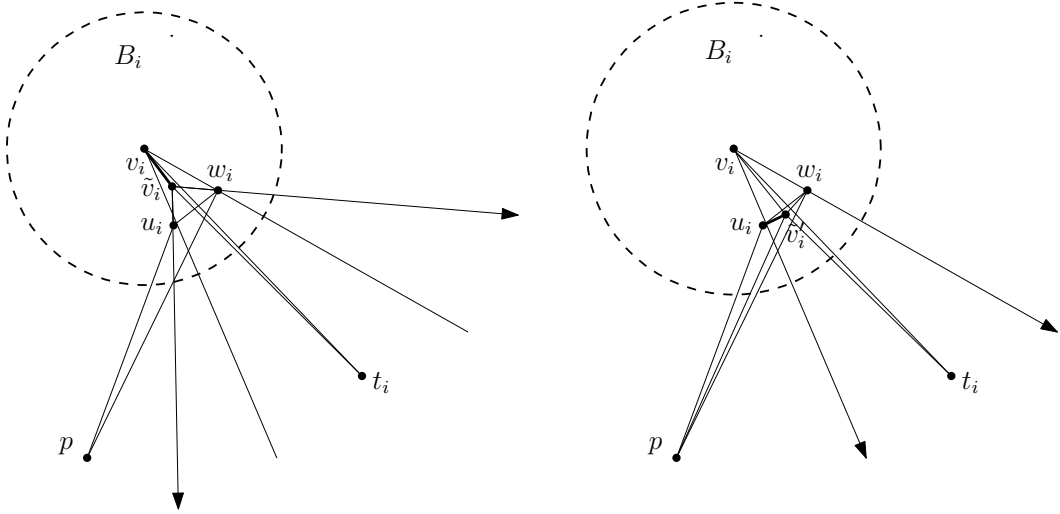}
\caption{The point sets $P$ (left) and $P'$ (right).}
\label{fig:compatible_pit}
\end{figure}

To prove the statement of the theorem, it is enough to prove the following:

If in the set $S$ there are $k$ disjoint segments, then there are triangulations ${\overline{F}}$ and $\overline{F'}$ of the sets $P$ and $P'$, respectively, with  $k'=3n-6-(s-k)$ compatible edges. And reciprocally, if ${\overline{F}}$ and $\overline{F'}$ contain  $k'=3n-6-(s-k)$ compatible edges, then $S$ contains $k$ disjoint segments.

Suppose first that $S$ contains a set $D$ of $k$ disjoint segments $v_it_i$. Let $P_0$ be the set of $4s$ common points of $P$ and $P'$ (all the points $v_i,u_i,w_i,t_i$). We build a triangulation $\overline{F_0}$ of $P_0$ in the following way. If $v_it_i$ is in $D$, then we include the edges $v_it_i,v_iu_i,v_iw_i,t_iu_i,t_iw_i$ in $\overline{F_0}$. If $v_it_i$
is not in $D$, then we include the edges $u_iv_i,v_iw_i,w_iu_i$ in $\overline{F_0}$. After that, we add edges in an arbitrary way until obtaining a triangulation $\overline{F_0}$
of $P_0$. Now, to obtain $\overline{F}$ and $\overline{F'}$, we add the points $\tilde v_i$ and $\tilde v'_i$ to $\overline{F_0}$ and retriangulate the triangular faces where they are.
If the edge $v_it_i$ is in $D$, then the points $\tilde v_i,\tilde v'_i$ are both in the triangle $u_iv_it_i$. So, by adding  the  point $\tilde v_i$ and the three edges $\tilde v_iu_i,\tilde v_iv_i, \tilde v_i t_i$ to $\overline{F_0}$, or the point $\tilde v'_i$ and the three edges $\tilde v'_iu_i,\tilde v'_iv_i, \tilde v'_i t_i$ we continue with all the edges being compatible. 
However, if the edge $v_it_i$ is not in $D$, then the point $\tilde v_i$ is in the triangle $u_iv_iw_i$, but the point $\tilde v'_i$ is in a triangle $u_iw_ip_i$. Then, we obtain a triangulation $\overline{F}$ of $P$ by adding the edges $\tilde v_iu_i,\tilde v_iv_i,\tilde v_iw_i$, and a triangulation $\overline{F'}$ of $P'$ by adding the
edges $\tilde v'_iu_i,\tilde v'_ip_i,\tilde v_iw_i$. Now, the images of the edges $\tilde v_iv_i$ of $\overline{F}$, edges $\tilde v'_iv_i$, are not in $\overline{F'}$
(there the edges $\tilde v'_ip_i$ appear instead). This situation occurs $(s-k)$ times, so the number of compatible edges between $\overline{F}$ and $\overline{F'}$ is $3n-6-(s-k)$.

Conversely, suppose $P$ and $P'$ contain triangulations $\overline{F}$ and $\overline{F'}$ with $k'=3n-6-(s-k)$ compatible edges. If $\tilde{v_i}t_i$ is not in $\overline{F}$, then the edges $\tilde v_iv_i$
and $u_iw_i$ must be both in $\overline{F}$, because the edge $u_iw_i$ can be crossed only by the edges $t_iv_i$ and $t_i\tilde v_i$. However, in set $P'$, always the edge $\tilde v'_iv_i$
is crossed by the edge $u_iw_i$. Then, for each index $i$ such that $\tilde{v_i}t_i$ is not in $\overline{F}$, one of the edges $\tilde v_iv_i$ or $u_iw_i$ of $\overline{F}$ is not in $\overline{F'}$. Therefore, this situation can happen at most $(s-k)$ times, that is, the triangulation $\overline{F}$ must contain at least $k$ edges $\tilde{v_i}t_i$. But if $k$ segments $\tilde{v_i}t_i$ are disjoint, also their corresponding $v_it_i$ edges are disjoint. Therefore, $S$ has to contain at least $k$ disjoint segments.
\end{proof}
Finally, let us analyze the complexity of augmenting a plane subgraph $F$ of $D_n$ until obtaining a maximal plane subgraph. 
Since $F$ has $O(n)$ edges, the set of edges of $S(v)$ not crossing $F$ can be trivially found in $O(n^2)$ time. This directly implies an $O(n^3)$ algorithm to obtain a maximal plane graph containing $F$: For $i=1,\ldots ,n$, update $F$ by adding the edges of $S(v_i)$ non-crossing $F$, not in $F$. The following result implies that, if $F$ is connected, finding a maximal plane subgraph containing $F$ can be done in $O(n^2)$ time.

\begin{theorem}\label{thm:algorithm2}
Given a simple topological drawing of~$K_n$, a connected plane subgraph $F$, and a vertex $v$, we can find the edges from $v$ to $F$ not crossing $F$ in $O(n)$ time.
\end{theorem}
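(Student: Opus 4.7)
The plan is to reduce the task to a visibility computation inside the face of $F$ that contains $v$. Since $F$ is plane and connected, there is a unique face $f$ of $F$ with $v$ in its closure, and any edge $vw$ that does not cross $F$ must lie entirely in $\overline{f}$; in particular $w$ must appear on the boundary walk $W=(u_1,\dots,u_m)$ of $f$. Because $F$ has $O(n)$ edges, $m=O(n)$, so the set of candidate endpoints is already of linear size.

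First I would locate $f$ and compute $W$ using standard face tracing with the rotation system of $F$, which runs in time linear in the number of edges of $F$, hence in $O(n)$. Then I would sweep once along $W$, maintaining a pointer into the rotation of $v$ and, when $v\notin F$, a ``currently nearest occluder'' boundary edge. Using the constant-time primitives listed earlier in the paper (whether two edges cross, the direction of a crossing, and the order in which two non-crossing edges cross a third), the processing at each vertex $u_i$ consists of either recording $vu_i$ as uncrossed, replacing the nearest occluder by a new boundary edge at $u_i$, or skipping over a sub-arc of $W$ that is hidden behind the current occluder and therefore cannot contribute any uncrossed ray. Proposition~\ref{prop:ReducedRange} justifies the ``skip'' step: whenever a ray $vu_i$ is first crossed by some edge $e$ of $F$, the next uncrossed ray is guaranteed to lie in the corresponding clockwise (respectively counterclockwise) range, which corresponds precisely to the portion of $W$ that the sweep is about to enter.

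The main obstacle is to make the amortised analysis tight, so that each boundary edge of $W$ contributes only $O(1)$ operations; in particular I must argue that no edge alternates into and out of the nearest-occluder role more than a bounded number of times. This monotonicity will follow from planarity of $F$ together with Lemma~\ref{lem:empty_faces}: any two boundary edges of $f$ that can both serve as occluders relative to $v$ appear in a consistent cyclic angular order around $v$, so the sequence of occluders behaves as a monotone stack whose push/pop work telescopes to $O(m)=O(n)$. Once this invariant is in place, combining the $O(n)$ cost of computing $W$ with the $O(n)$ cost of the single amortised sweep yields the claimed running time.
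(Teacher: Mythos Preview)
Your proposal has the right high-level shape (locate the face $f$, walk its boundary while advancing a pointer in the rotation of $v$), but the two lemmas you invoke do not do the work you assign to them, and this leaves the core correctness argument unsupported.

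Lemma~\ref{lem:empty_faces} says that if a plane cycle has no diagonal on one side then every diagonal lies on the other side. It is a statement about where edges between cycle vertices live, not about angular order of boundary edges as seen from an interior point; it gives you nothing about whether your ``nearest occluder'' sequence is monotone. Likewise, Proposition~\ref{prop:ReducedRange} guarantees that some uncrossed ray exists in a certain range after you see a crossing, but it does not tell you that \emph{all} remaining uncrossed rays, or the next one, lie there, so it does not justify skipping an arc of $W$. The structural fact that actually drives the linear bound is different and you have not stated it: if the rays $vw_i, vw_j, vw_k$ appear clockwise in $S(v)$, then their \emph{first} crossing points with the boundary of $f$ appear in the same clockwise order along $W$. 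This order-preservation is a straightforward Jordan-curve argument (two rays together with the boundary arc between their first crossings enclose a region), and it is what lets the boundary walk and the rotation walk be advanced in lockstep without backtracking.

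Given that fact, a single pass is still not enough. After one clockwise sweep you can only certify, for each surviving candidate $vu_j$, that its first crossing is not \emph{before} $u_j$ along $W$; an edge whose first crossing lies \emph{after} its endpoint survives spuriously. The paper handles this by running the same synchronized walk a second time, counterclockwise, on the list of survivors; only edges that survive both directions are genuinely uncrossed. Your single-sweep-with-occluder-stack plan would need a separate argument to rule out such survivors, and you have not provided one. Finally, note that when $v\in F$ there is not a unique face containing $v$: you must process each incident face on its own angular sector of $S(v)$, which the paper does at the end.
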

\begin{proof}
Notice that as $F$ is a plane graph, we can compute in linear time, for each vertex $w$ the clockwise order of the edges of $F$ incident to $w$, the faces of $F$, and for each face $f$, the clockwise cyclic list of edges and vertices found along its boundary.

Suppose first that the vertex $v$ is not in $F$, and let $vw_1$ be the first edge in the rotation of $v$ with one endpoint in $F$. The algorithm runs in three stages. In the first stage, it starts by finding the edge of $F$, edge $e_1=u_0u_1$, that intersects $vw_1$ closest to $v$ along $vw_1$. When the first intersection point occurs precisely at the vertex $w_1$, we take $e_1$ as the first edge of $F$ that follows, counterclockwise, to $w_1v$ in $S(w_1)$. 
Using the rotation system and its inverse, this edge $e_1$ of $F$ can be found in linear time, since $|F| \in O(n)$.
It also gives us the face~$f$ of $F$ containing the vertex $v$ inside.

For simplicity, let us suppose that $f$ is a bounded face and that the boundary of $f$ is a simple cycle, formed by the edges $e_1=u_0u_1,e_2=u_1u_2,e_m=u_{m-1}u_0$. We will later discuss the general case.
Notice that if the edges $vw_i,vw_j,vw_k$ are in this clockwise order in $S(v)$, their corresponding first crossing points $x_i,x_j,x_k$ with $F$  are found in a clockwise walk of the boundary of~$f$ in that same clockwise order. 
See the right bottom drawing of \fig{fig:order_along_face}.

In the second stage, the algorithm simulates a clockwise walk $x_1u_1,u_1u_2,\ldots ,u_{k-1}u_k,\ldots $ of the boundary of $f$ starting at point $x_1$, the first crossing point of  $vw_1$ with $e_1=u_0u_1$, and simultaneously a clockwise walk $vw_1,vw_2,\ldots ,vw_i,\ldots $ on the edges of the star $S(v)$, beginning with the edge $vw_1$. In each step, the algorithm makes progress in at least one of the two walks, by adding the following edge on the boundary to the boundary walk or passing to explore the following edge of $S(v)$.
In this process the algorithm will keep a list $\sigma $ with some of the explored edges of $S(v)$.

\begin{figure}[!htb]
\centering
\includegraphics[scale=0.8]{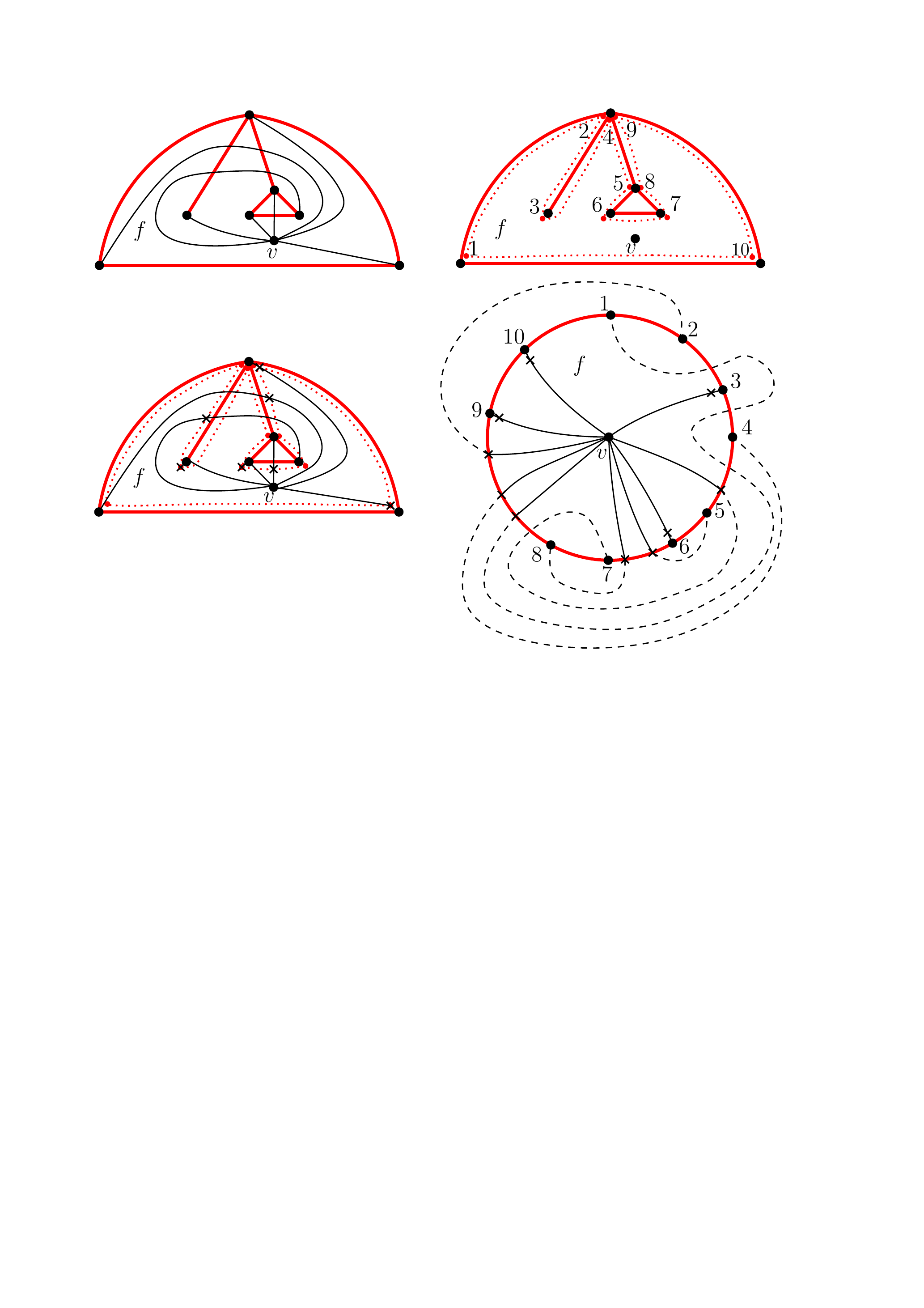}
\caption{Top left: A vertex $v$ inside the face $f$. Only the edges $vu_i$, with $u_i$ incident to the face $f$, can be uncrossed by $F$.
Top right: A clockwise walk along the boundary of the face $f$.
Bottom left: In a walk along the boundary of $f$, the first crossing points of the edges of $S(v)$ are found in the same order as the edges of $S(v)$. 
Bottom right: An equivalent drawing to the top left figure with the boundary of $f$ being a simple cycle. Some vertices, like $(2,4,9)$, can correspond to the same vertex of the first drawing.}
\label{fig:order_along_face}
\end{figure}

In a generic step, the edges $S_i=(vw_1,vw_2,\ldots ,vw_i)$ of $S(v)$, and the portion of the boundary of $f$, $W_k=(x_1u_1,u_1u_2,\ldots ,u_{k-1}u_k)$, have been visited, and the two following invariants hold:

(A) The first crossing of the edge $vw_i$ is not on $W_{k-1}=(x_1u_1,u_1u_2,\ldots ,u_{k-2}u_{k-1})$ (the walk $W_k$ minus its last edge).

(B) The list $\sigma $ contains an ordered list $(vu_{i_1},vu_{i_2},\ldots ,vu_{i_s})$ of the explored edges of $S(v)$ finishing  at some of the vertices $u_j,1\le j<k$, satisfying:

 (B1) All the explored edges of $S(v)$ not placed in $\sigma $ cross the boundary of $f$.

 (B2) The first crossing point of each edge $vu_{j}$ of $\sigma $ with the boundary of $f$ is either $u_{j}$ or is placed clockwise after $u_{j}$.

Initially, if $x_1$ is an interior point of the edge $e_1=u_0u_1$, then $W_k=(x_1u_1)$, $S_i=(vw_1,vw_2)$ and the list $\sigma $ is empty. If $x_1$ coincides with the vertex $u_0$, then $W_k=(u_0u_1)$, $S_i=(vw_1,vw_2)$ and the list contains the edge $vu_0$. In both cases invariants (A) and (B) are satisfied (the walk $W_{k-1}$ is empty or consists of only one vertex).

In this second stage the algorithm proceeds as follows:
\begin{itemize}
\item If $vw_i$ crosses the last edge of $W_k$, edge $e_k$, or if $w_i$ is not a vertex of $f$, iterate considering the clockwise successor $vw_{i+1}$ of $vw_i$ in the rotation of~$v$.

As the first crossing of $vw_i$ must be on the edge $e_k$ or a posterior edge $e_t,t>k$, also the first crossing of $vw_{i+1}$ must be on $e_k$ or a posterior edge. Thus invariant (A) is kept. On the other hand, observe that $\sigma $ does not change, $vw_i$ must not be included in $\sigma $ (it crosses $f$), and $W_k$ is not modified. Therefore invariant (B) is also kept.

\item If $vw_i$ does not cross $e_k$ and $w_i$ is a vertex of $f$, $w_i\neq u_k$, then, add the following edge $e_{k+1}$ on $f$ to $W_k$, keeping the same edge $vw_i$ of $S(v)$.

Invariant (A) is kept, because the first crossing point of $vw_i$ cannot be on $W_k$. Invariant (B) is also kept, because $\sigma $ is not modified.

\item If $vw_i$ does not cross $e_k$ and $w_i=u_k$,
    then, add $vw_i$ to the list $\sigma $, pass to explore the following edge $vw_{i+1}$ of $S(v)$ and add the following edge $e_{k+1}$ on $f$ to $W_k$.

Again, invariant (A) is kept, because the first crossing of $vw_{i+1}$ must be after $u_k$. On the other hand, the first crossing point of $vw_i$ is either $u_k$ or it is placed after $u_k$, hence property (B) is kept.

\end{itemize}

This second stage of the algorithm ends when all the edges of $S(v)$ and $f$  have been explored. The last edge of the boundary of $f$ being either $u_0x_1$  or $u_{m-1}u_0$.  Therefore, at the end,
invariant (B) implies that $\sigma$ will contain the uncrossed edges of~$S(v)$ plus some crossed edges $vu_i$ of $S(v)$ satisfying that the first crossing (on the boundary of $f$) is placed after the endpoint $u_i$ of that edge.

In each step of this stage, a new edge in the boundary of $f$, a new edge of $S(v)$, or both edges become explored.
As the number of edges in $f$ and in $S(v)$ is linear, this second stage of the algorithm runs in $O(n)$ time.

In the third stage, the algorithm repeats counterclockwise the above stage considering only the edges in $\sigma $. That means, it explores counterclockwise the boundary of $f$ (in the order $x_1u_0,$ $ u_0u_{m-1},\ldots $,  and counterclockwise the edges of $S(v)$ placed in $\sigma $ (so, in the order $vu_{i_s},$ $ vu_{i_{s-1}},\ldots $ ). In this third stage, in linear time, a new list $\overline{\sigma }$ is obtained. By invariant (B1), all the uncrossed edges of $S(v)$ have to be in $\overline{\sigma }$. And by invariant (B2), if $vu_i$ is in $\overline{\sigma }$, its first crossing point cannot be clockwise nor counterclockwise before $u_i$, so it has to be $u_i$. Therefore, $\overline{\sigma }$ will contain the uncrossed edges of $S(v)$.

In general, the boundary of face $f$ is not a simple cycle, some edges of $f$ can be incident to $f$ for both sides, so they appear twice in a walk along the boundary of $f$. However, this general case  can be transformed to the previous case by standard techniques, 
as done in ~\cite{fulek_ruiz_vargas} in their proof of the general case of Lemma~\ref{lem:fulek_ruiz_vargas}. In \fig{fig:order_along_face}, the bottom right figure shows how to transform the drawing of the top left figure, to obtain an equivalent drawing where the boundary of $f$ is a simple cycle. When the face $f$ is the unbounded face the algorithm is totally analogous.

Finally, let us consider the case when the vertex $v$ is in $F$. Then, vertex $v$ can be incident to several faces $f_1,\ldots ,f_l, l\ge 1$. Again, for simplicity, suppose that the boundary of each one of these faces is a simple cycle. For each face $f_i$, if $vw_{i_1},vw_{i_2}$ are the two edges incident to vertex $v$ in $f_i$, we can compute by the above method the uncrossed edges of $S(v)$ placed inside $f_i$, using only the edges of $S(v)$ placed clockwise between $vw_{i_1}$ and $vw_{i_2}$.
\end{proof}

\section{Conclusion}
In this paper, we considered maximal and maximum plane subgraphs of simple topological drawings of~$K_n$.
It turns out that maximal plane subgraphs have interesting structural properties.
These insights could be useful in improving the bounds on the number of disjoint edges in any such drawing, continuing this long line of research.

Also, algorithmic questions arise.
For example, Proposition~\ref{prop:ReducedRange} ensures that there are always two edges connecting a vertex $v$ to a not necessarily connected plane graph~$F$ in~$D_n$ without crossings.
Moreover, the set of edges of $S(v)$ not crossing $F$ can be trivially found in $O(n^2)$ time.
This leads to the following question.

\begin{problem}
Given a not necessarily connected plane graph $F$ in $D_n$, plus a vertex~$v$ not in $F$, can the edges of $S(v)$ incident to but not crossing $F$ be found in $o(n^2)$ time?
\end{problem}


\bibliographystyle{abbrv}
\bibliography{bibliography}

\end{document}